\newtheorem{remark}{Remark}
\newtheorem{assumption}{Assumption}
\newtheorem{definition}{Definition}
\newtheorem{proposition}{Proposition}
\newtheorem{property}{Property}
\newtheorem{theorem}{\textbf{Theorem}}
\newtheorem{corollary}{Corollary}
\newtheorem{lemma}{Lemma}
\newcommand{\sss}[1]{_{\scriptscriptstyle #1}}
\newcommand\BibTeX{{\rmfamily B\kern-.05em \textsc{i\kern-.025em b}\kern-.08em
		T\kern-.1667em\lower.7ex\hbox{E}\kern-.125emX}}
\newcommand{\rev}[1]{#1}
\title{~Learning Model Predictive Control with Long Short-Term Memory Networks}
\author{Enrico Terzi$^{*}$, Fabio Bonassi, Marcello Farina, and Riccardo Scattolini
	\thanks{$^{*}$ corresponding author. The authors are with the Dipartimento di Elettronica, Informazione e Bioingegneria, Politecnico di Milano, Via Ponzio 34/5, 20133, Milano, Italy. E-mail: {\tt\small name.surname@polimi.it}}}
\begin{document}
	
	\maketitle
	\thispagestyle{empty}
	\pagestyle{empty}

\begin{abstract}
	This paper analyzes the stability-related properties of Long Short-Term Memory (LSTM) networks and investigates their use as the model of the plant in the design of Model Predictive Controllers (MPC).
	First, sufficient conditions guaranteeing the Input-to-State stability (ISS) and Incremental Input-to-State stability ($\delta$ISS) of LSTM are derived.
	These properties are then exploited to design an observer with guaranteed convergence of the state estimate to the true one. 
	Such observer is then embedded in a MPC scheme solving the tracking problem. The resulting closed-loop scheme is proved to be asymptotically stable.
	The training algorithm and control scheme are tested numerically on the simulator of a \emph{pH} reactor, and the reported results confirm the effectiveness of the proposed approach.	
	
	\textit{Keywords.} Learning-based control; Nonlinear model predictive control; Output feedback predictive control; Long short-term memory neural networks; Machine learning
	\end{abstract}

	\maketitle

\section{Introduction}

The availability of large and informative datasets, collected on plants during long periods of time and spanning many different working conditions, is nowadays a typical starting point in control-related projects \cite{hou2013model, hand2006data}. Also thanks to the recent introduction and popularity of novel tools and algorithms for extracting information from data \cite{wu2008top}, engineers and scientists are increasingly focusing on data-based identification and control techniques \cite{bristow2006survey, aswani2013provably}.
Several approaches are aimed at the \textit{direct} learning of the controller from data \cite{tanaskovic2017data}, these algorithms can be either based on a - possibly reference - model, like in the Virtual Reference Feedback Tuning approach \cite{CAMPI20021337} and in Iterative Learning \cite{bristow2006survey}, or exploit model-free techniques, like Reinforcement Learning \cite{lenz2015deepmpc}.
On the other hand, \textit{indirect} approaches are aimed at first finding a model of the plant, based on which the controller is designed.
In the latter category, a quite recent model class that has gained extraordinary attention and popularity is the class of \textit{Neural Networks} (NN) \cite{haykin1994neural}, which have proven to be effective in a large variety of contexts and tasks, like image \cite{krizhevsky2012imagenet}, speech \cite{graves2013speech}, and handwriting recognition \cite{graves2009offline}, prediction \cite{zhang2003time}, and forecasting \cite{zhang1998forecasting,hippert2001neural}.

In the control context, and in order to account for the dynamic nature of the systems to be controlled, recurrent Neural Networks (RNN) have already been studied \cite{miller1995neural,delgado1995dynamic} and used in a number of applications \cite{wong2018recurrent,li2016distributed,jin2018robot,lanzetti2019recurrent}. In RNN, the output of the network is fed-back as input, so constituting a loop which allows to properly describe the dynamics of the system. However, the tuning of RNN calls for a complex training algorithm, that is affected by the so-called ``vanishing (or exploding) gradient" problem \cite{hochreiter1998vanishing}. Essentially, this prevents a proper training given the recursive equations featuring the network, that cause a vanish (or explosion) of information and gradient over the iterations. Up to date, only a couple of architectures proved to be able to practically overcome this issue, namely Echo State Networks (ESN) \cite{jaeger2002tutorial} and Long Short Term Memory (LSTM) networks \cite{greff2016lstm}.

Despite their potential impact in the control field, the literature regarding the theoretical properties of RNNs is little, since they are often tested empirically and with no theoretical guarantees, in favour of experimental evidence. 
This represents a strong limitation on the use of RNN in the design of control systems, and motivates the analysis of their properties from a control-theoretical perspective. For all these reasons, the properties of ESN, in terms of stability, and their use as suitable models of the plant in the design of Model Predictive Control (MPC) regulators with stability guarantees, have been recently analyzed \cite{8728229}.
Although ESsN have proven to be effective and characterized by a simple training procedure, LSTM \cite{gers2000learning} are gaining a wider popularity. First introduced in 1997 \cite{hochreiter1997long}, LSTM are nowadays widely used for several tasks \cite{xingjian2015convolutional, sundermeyer2012lstm} and in everyday's devices, such as mobile phones and GPS navigators for speech recognition. This diffusion is due to their flexibility and ability to recover long-term dependencies across the data thanks to their internal states. In the context of dynamical systems and control, some very recent stability results about their autonomous, i.e. non forced, version have been described by Deka et al. \cite{Deka2019, 8702629}, and  Amrouche et al. \cite{amrouche2018long}, where also an analysis of their equilibria has been reported.

In this paper, we investigate the use of LSTM in the context of Model Predictive Control, extending some preliminary investigations \cite{bonassi2019lstm}.
First, conditions on LSTM's parameters (internal weights) guaranteeing the Input-to-State Stability\cite{ljung2001system} (ISS) and the Incremental Input-to-State Stability\cite{Discrete-timedeltaISS} ($\delta$ISS) properties are derived; 
notably, these conditions explicitly depend on the model parameters and can be forced in the training phase of the network.
 Then, assuming that the trained net exactly represents the model of the system, and relying on $\delta$ISS, we design an observer guaranteeing that the estimated state asymptotically converges to the true value. 
 Based on the LSTM model and on the state observer, we then design an MPC control algorithm solving the tracking problem for constant state and input reference values and in presence of input constraints. The stability of the equilibrium point is obtained with a suitable tuning of the MPC design parameters, i.e. the state and control weighting functions and the terminal cost. Notably, no terminal constraints are required, which makes the tuning procedure easier.
\rev{A probabilistic method, based on the Scenario Approach \cite{bonassi2019lstm, campi2009scenario}, is finally proposed to estimate an upper bound of the model-plant mismatch, which is typically required to design robust MPC algorithms coping with the possible presence of model uncertainties.}

The performances of the overall control system  are tested numerically on the simulator of a \emph{pH} neutralization process \cite{hall1989modelling}, that represents a well-known benchmark for nonlinear SISO systems. The modeling capability of the trained LSTM is first quantitatively evaluated on a validation dataset, then closed-loop experiments are reported, witnessing the potentialities of the proposed approach.

The paper is organized as follows: in Section \ref{sec:model} the dynamic model of LSTM is analyzed, and the conditions guaranteeing the ISS and the $\delta$ISS properties are established. In Section \ref{sec:control} the design of the stabilizing observer and of the MPC algorithm is discussed, while in Section \ref{sec:example} the numerical example is described. Finally, conclusions and hints for future work are included in Section \ref{sec:conclusions}. An Appendix reports the proofs of the theoretical results.\smallskip\\
\textbf{Notation and basic definitions}.\\
We denote $v_{(j)}$ the $j$-th entry of vector $v$. $0_{a,b}$ is the null matrix of dimension $a\times b$, $I_n$ is the identity matrix of order $n$. Moreover, given a vector $v$, we denote $\|v\|$ as the 2-norm of $v$, $\|v\|^2_A=v^TAv$ its squared norm weighted by matrix $A$, and with $\|v\|_{\infty}$ its infinity norm, i.e., $\|v\|_{\infty}=\max_{j=1,\dots,n}|v_{(j)}|$, being $n$ the number of entries of $v$. $v^T$ denotes vector transpose and diag$(v)$ the diagonal matrix with $v$ on the diagonal. We denote with $\|A\|$ and with $\|A\|_{\infty}$ the induced 2-norm and $\infty$-norm of $A$, respectively, while $\rho(M)$ is the spectral radius of the square matrix $M$ (i.e. maximum absolute value of its eigenvalues). Given an interval $[a,b] \subset \mathbb{R}$ and a positive integer $n$ we denote $[a,b]^n=\{x \in \mathbb{R}^n:x_{(j)} \in [a,b], \forall j=1,\dots,n\}$. The same notation is applied for open intervals.
With reference to the discrete-time system
\begin{equation}
	\label{eq:intro:discretetime}
	\chi^+=\varphi(\chi,u),
\end{equation}
where $\chi$ is the state vector, $u$ is the input vector, and $\varphi(\cdot)$ is a nonlinear function of the input and the state, $\chi^+$ indicates the value of $\chi$ at the next time step. We indicate with $\chi_{i}(k)$ the solution to system \eqref{eq:intro:discretetime} at time step $k$ starting from the initial state $\chi_{0i}$ with input sequence $u_i(0)$, $\dots$, $u_i(k-1)$. For the sake of readability, time index $k$ will be omitted where possible and clear from the context.
Let us now recall some definition, see \cite{Discrete-timedeltaISS}, useful for the following developments.
\begin{definition} [$\mathcal{K}$-Function]
	A continuous function $\alpha: \mathbb{R}_{\geq0}\to\mathbb{R}_{\geq0}$ is a class $\mathcal{K}$ function if $\alpha(s)>0$ for all $s>0$, it is strictly increasing, and $\alpha(0)=0$.
\end{definition}
\begin{definition} [$\mathcal{K}_{\infty}$-Function]
	A continuous function $\alpha: \mathbb{R}_{\geq0}\to\mathbb{R}_{\geq0}$ is a class $\mathcal{K}_{\infty}$ function if it is a class $\mathcal{K}$ function and $\alpha(s)\to\infty$ for $s\to\infty$.
\end{definition}
\begin{definition} [$\mathcal{KL}$-Function]\label{def:KL}
	A continuous function $\beta: \mathbb{R}_{\geq0}\times\mathbb{Z}_{\geq0}\to\mathbb{R}_{\geq0}$ is a class $\mathcal{KL}$ function if $\beta(s,k)$ is a class $\mathcal{K}$ function with respect to $s$ for all $k$, it is strictly decreasing in $k$ for all $s>0$, and $\beta(s,k)\to0$ as $k\to\infty$ for all $s>0$.
\end{definition}
\rev{
\begin{definition}[ISS\cite{bonassi2019lstm, jiang2001input}]\linespread{1.2}
	\label{def:ISS}
	System \eqref{eq:intro:discretetime} is called \textit{input-to-state stable} in $\mathcal{X}$ with respect to $\mathcal{U}$, if there exist functions $\beta\in\mathcal{KL}$ and $\gamma_v, \gamma_b\in \mathcal{K}_{\infty}$ such that, for any $k\in\mathbb{Z}_{\geq0}$, any initial state $\chi_{0} \in \mathcal{X}$, any input sequence $\{v(0), v(1), ...: v(\tau) \in \mathcal{U} \}$, and any bias $b_c$, it holds that:
	\begin{equation}\vspace{-0.2cm}	\| \chi(k) \|\leq\beta(\|\chi_{01} \|,k)+\gamma_v(\max_{h\geq 0}\|v(h)\|) + \gamma_b(\| b_c \|)\\[1.5ex]
	\label{eq:ISS}
	\end{equation}\normalsize
\end{definition}
}
\begin{definition}[$\delta$ISS\cite{Discrete-timedeltaISS}]\linespread{1.2}
	\label{def:deltaISS}
	System \eqref{eq:intro:discretetime} is called \textit{incrementally input-to-state stable} in $\mathcal{X}$ with respect to $\mathcal{U}$, if there exist functions $\beta_\delta\in\mathcal{KL}$ and $\gamma_\delta\in \mathcal{K}_{\infty}$ such that, for any $k\in\mathbb{Z}_{\geq0}$, any initial states $\chi_{01},\chi_{02}\in \mathcal{X}$, and any pair of input sequences $\{v_1(0), v_1(1), ...: v_1(\tau) \in \mathcal{U} \}$ and $\{v_2(0), v_2(1), ...: v_2(\tau) \in \mathcal{U} \}$, it holds that:
	\begin{equation}\vspace{-0.2cm}	\|\chi_1(k)-\chi_2(k)\|\leq\beta_\delta(\|\chi_{01}-\chi_{02}\|,k)+\gamma_\delta(\max_{h\geq 0}\|v_{1}(h)-v_{2}(h)\|)\\[1.5ex]
	\label{eq:deltaISS}
	\end{equation}\normalsize
\end{definition}\linespread{1}

\rev{The ISS property guarantees the vanishing contribution of initialization and the boundedness of the state trajectories, and allows in this work to perform the safety verification of the network \cite{bonassi2019lstm}. 
On the other hand, the $\delta$ISS property is commonly required for many purposes, e.g. to design MPC regulators \cite{Discrete-timedeltaISS, kohler2019simple} and Moving Horizon estimators \cite{rao2003constrained, alessandri2008moving}, as it guarantees that the effects of different initializations vanish, and that feeding the network with two different input sequences leads to state trajectories with bounded distance. In the following we devise conditions under which the LSTM \eqref{eq:model:lstm} are guaranteed to feature these properties.}

%
%
%
%
\section{LSTM networks}\label{sec:model}
\subsection{State space form}
The LSTM network, with input $u \in \mathbb{R}^{n_u}$ and output $y \in \mathbb{R}^{n_y}$, is described by the following system of equations \cite{gers2001lstm,gers2002learning}.
\begin{subequations}\label{eq:model:lstm}
	\begin{align}
	x^+	 =& \, \sigma_g(W_f u + U_f \xi + b_f) \circ x +\sigma_g(W_i u + U_i \xi + b_i) \circ \sigma_c(W_c u + U_c \xi + b_c) \label{eq:model:lstm_x} \\
	\xi^+ =& \, \sigma_g(W_o u + U_o \xi + b_o) \circ \sigma_c(x^+) 	\label{eq:model:lstm_xi} \\
	y =& \, C\xi + b_y \label{eq:model:lstm_output}
	\end{align}
\end{subequations}
The vector $\chi=\begin{bmatrix}x^T & \xi^T\end{bmatrix}^T$ is the state of the network, so that \eqref{eq:model:lstm} can be rewritten in the general form \eqref{eq:intro:discretetime}. In the related terminology, $x \in \mathbb{R}^{n_x}$ is named \emph{hidden state}, while $\xi \in \mathbb{R}^{n_x}$ is named \emph{output state} (or cell).\\
In system \eqref{eq:model:lstm}, $\sigma_g(x)=\frac{1}{1+e^{-x}}$ and $\sigma_c(x)=\text{tanh}(x)$; when applied to a vector, we assume to apply them entry-wise. Also, $\circ$ is the element-wise (Hadamard) product. The terms $W_f,W_i,W_o,W_c \in \mathbb{R}^{n_x \times n_u}$, $U_f,U_i,U_o,U_c \in \mathbb{R}^{n_x \times n_x}$,$C \in \mathbb{R}^{n_y\times n_x}$ are weighting matrices and $b_f,b_i,b_o,b_c \in \mathbb{R}^{n_x}, b_y \in \mathbb{R}^{n_y}$ are biasing vectors.

\begin{assumption}[Boundedness of $u$] \label{ass:u_bounded}
	The input is bounded, i.e.
	\begin{equation}\label{eq:model:bound_u}
		u \in \mathcal{U}=[-u_{\scriptscriptstyle \text{max}}, u_{\scriptscriptstyle \text{max}}]^{n_u}.
	\end{equation}
\end{assumption}
Note that Assumption \ref{ass:u_bounded} is quite general. It could be associated to physical saturations of the input variable or can be achieved by means of a proper normalization of the dataset employed for training \cite{goodfellow2016deep}.
\rev{\begin{remark}
	In the LSTM model \eqref{eq:model:lstm}, the logistic and tanh activation functions have been considered. However, the proposed theory can be readily extended to generic monotonically increasing upper- and lower-bounded  functions, provided that $\sigma_{c}(0) = 0$.
\end{remark}}

\subsection{Properties of the system functions and bounds on the variables}
First of all, note that, in view of their definitions,
\begin{subequations} \label{eq:model:bounds_sigmas}
\begin{align}
&\sigma_g(t) \in (0,1), \quad \forall t \in \mathbb{R} \label{eq:bound_sigmag}\\
&\sigma_c(t) \in (-1,1), \quad \forall t \in \mathbb{R}\label{eq:bound_sigmac},
\end{align}
\end{subequations}
Also, $\sigma_g(t)$ and $\sigma_c(t)$ are Lipschitz continuous functions \cite{sohrab2003basic} with Lipschitz constants $L_g=0.25$ and $L_c=1$, respectively, and they are both strictly monotonic.
In view of \eqref{eq:model:bounds_sigmas}, see \eqref{eq:model:lstm},
\begin{align} \label{eq:model:bound_xi}
\xi\in (-1,1)^{n_x}, \,\, \text{ i.e. } \, \xi_{(j)} \in (-1,1), \forall j=1,\dots,n_x.
\end{align}
Rewriting equation \eqref{eq:model:lstm} for each entry of the state vectors we obtain:
\begin{subequations}
\begin{align}
x_{(j)}^+	 = & \,	\sigma_{g}(W_f u + U_f \xi + b_f)_{(j)} \circ x_{(j)}+ \sigma_{g}(W_iu + U_i \xi + b_i)_{(j)} \circ \sigma_{c}(W_cu + U_c \xi + b_c)_{(j)} \label{eq:model:lstm_i}\\
\xi_{(j)}^+	 =& \, \sigma_{g}(W_o u + U_o \xi + b_o)_{(j)} \circ \sigma_{c}(x^+)_{(j)} 		\label{eq:model:output_i}
\end{align}
\end{subequations}
Note that, in \eqref{eq:model:lstm_i}, for each $j\in 1,\dots,n_x$,
\begin{equation}\label{eq:model:bound:sigma_g_f}
\begin{aligned}
\left\lvert \sigma_{g}(W_f u + U_f \xi + b_f)_{(j)}\right\lvert & \leq \left\|\sigma_{g}(W_f u + U_f \xi + b_f)\right\|_{\infty} \leq \max_{u \in \mathcal{U},\xi\in (-1,1)^{n_x}} \left\| \sigma_{g}(W_f u + U_f \xi + b_f)\right\|_{\infty}  \\
& \leq \left\| \max_{u \in \mathcal{U},\xi\in (-1,1)^{n_x}} \sigma_{g}(W_f u + U_f \xi + b_f) \right\|_{\infty}  \leq \left\| \sigma_{g} \left(\max_{u \in \mathcal{U},\xi\in (-1,1)^{n_x}} [W_f u + U_f \xi + b_f] \right) \right\|_{\infty} \\
& \leq  \sigma_{g} \left(\max_{u \in \mathcal{U},\xi\in (-1,1)^{n_x}} \|W_f u + U_f \xi + b_f\|_{\infty} \right) \leq  \sigma_{g} \left(\|\begin{bmatrix}W_f u_{\scriptscriptstyle \text{max}} & U_f & b_f \end{bmatrix}\|_{\infty} \right) = \bar{\sigma}_g^f 
\end{aligned}
\end{equation}
where we relied on \eqref{eq:model:bound_u} and \eqref{eq:model:bound_xi}.
With similar arguments we derive:
\begin{align}
\left\lvert  \sigma_{g}(W_iu + U_i \xi + b_i)_{(j)} \right\lvert &\leq\bar{\sigma}_g^i= \sigma_{g}(\|\begin{bmatrix}W_i u_{\scriptscriptstyle \text{max}} & U_i & b_i \end{bmatrix}\|_{\infty})\\
\left\lvert \sigma_{g}(W_o u + U_o \xi + b_o)_{(j)} \right\lvert &\leq\bar{\sigma}_g^o= \sigma_{g}(\|\begin{bmatrix}W_o u_{\scriptscriptstyle \text{max}} & U_o & b_o \end{bmatrix}\|_{\infty}) \label{eq:model:bound:sigma_g_o}\\
\left\lvert  \sigma_{c}(W_cu + U_c \xi + b_c)_{(j)} \right\lvert &\leq \bar{\sigma}_c^c= \sigma_{c}(\|\begin{bmatrix}W_c u_{\scriptscriptstyle \text{max}} & U_c & b_c \end{bmatrix}\|_{\infty}) \label{eq:model:bound:sigma_c_c}
\end{align}

%
%
Also, by analyzing equation \eqref{eq:model:lstm_i}, and recalling \eqref{eq:model:bound:sigma_g_f}-\eqref{eq:model:bound:sigma_c_c}, we define an invariant set $\mathcal{X}=\bigg\{x \in \mathbb{R}:|x|\leq \frac{\bar{\sigma}_g^i\bar{\sigma}_c^c}{1-\bar{\sigma}_g^f}\bigg\}$ for $x_{(j)}$, i.e. such that
\begin{equation} \label{eq:model:bound_x}
|x_{(j)}(0)| \in \mathcal{X} \implies |x_{(j)}(t)| \in \mathcal{X}, \quad \forall t \geq 0.
\end{equation}
Thanks to this definition, we can bound $\sigma_c(x^+)_{(j)}$ in \eqref{eq:model:output_i}, namely
\begin{equation}\label{eq:model:bound:sigma_c_x}
|\sigma_c(x^+)_{(j)}| \leq \bigg|\sigma_c\left(\frac{\bar{\sigma}_g^i\bar{\sigma}_c^c}{1-\bar{\sigma}_g^f}\right)\bigg|\leq \sigma_c \left(\frac{\bar{\sigma}_g^i\bar{\sigma}_c^c}{1-\bar{\sigma}_g^f}\right)= \bar{\sigma}_c^x
\end{equation}

\subsection{Stability properties of the LSTM networks}

\rev{
In the following, sufficient conditions guaranteeing the stability properties are presented. For compactness, all the proofs are reported in the Appendix.
\begin{theorem} \label{thm:iss}
	The LSTM network \eqref{eq:model:lstm} is ISS with respect to the input $u$ and bias $b_c$ if $\rho(A) < 1$, where 
	\begin{equation} \label{eq:iss:A}
		A = \begin{bmatrix}
			\bar{\sigma}_g^f & \bar{\sigma}_g^i \| U_c \| \\
			\bar{\sigma}_g^o \bar{\sigma}_g^f  & \bar{\sigma}_g^o \bar{\sigma}_g^i \| U_c \|
		\end{bmatrix}.
	\end{equation}
\end{theorem}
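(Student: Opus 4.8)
The plan is to bound the Euclidean norms of the two state components $x$ and $\xi$ separately at each step, assemble these scalar bounds into a two-dimensional linear comparison system whose system matrix is exactly $A$, and then exploit $\rho(A)<1$ to extract the $\mathcal{KL}$ and $\mathcal{K}_\infty$ bounds required by Definition \ref{def:ISS}. The guiding observation is that every entry of $A$ is non-negative, so that the entry-wise order relation is preserved under multiplication by $A$ and the nonlinear norm recursion can be dominated by an asymptotically stable linear system.

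First I would bound $\|x^+\|$ from \eqref{eq:model:lstm_x}. The Hadamard product with the forget gate is controlled entry-wise by $\bar{\sigma}_g^f$ via \eqref{eq:model:bound:sigma_g_f}, giving $\|\sigma_g(W_f u + U_f \xi + b_f)\circ x\|\leq\bar{\sigma}_g^f\|x\|$; the input gate contributes a factor $\bar{\sigma}_g^i$, while the cell candidate is handled through the Lipschitz property of $\sigma_c$ (with $L_c=1$ and $\sigma_c(0)=0$), yielding $\|\sigma_c(W_c u + U_c \xi + b_c)\|\leq\|U_c\|\|\xi\| + \|W_c\|\|u\| + \|b_c\|$. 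Collecting terms produces
\begin{equation*}
\|x^+\|\leq \bar{\sigma}_g^f\|x\| + \bar{\sigma}_g^i\|U_c\|\|\xi\| + \bar{\sigma}_g^i\|W_c\|\|u\| + \bar{\sigma}_g^i\|b_c\|,
\end{equation*}
whose state coefficients coincide with the first row of $A$. From \eqref{eq:model:lstm_xi} the output gate contributes $\bar{\sigma}_g^o$ and, using $|\sigma_c(x^+)_{(j)}|\leq|x^+_{(j)}|$, one gets $\|\xi^+\|\leq\bar{\sigma}_g^o\|x^+\|$; substituting the bound just obtained for $\|x^+\|$ reproduces exactly the second row of $A$. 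Writing $z=[\|x\|\ \ \|\xi\|]^T$, the two inequalities become the component-wise comparison system $z^+\leq A z + B_u\|u\| + B_b\|b_c\|$, with $A$ as in \eqref{eq:iss:A} and suitable non-negative vectors $B_u, B_b$.

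Finally I would invoke the comparison argument: since $A$ is non-negative, iterating from $z(0)$ preserves the inequalities and gives $z(k)\leq A^k z(0) + \sum_{j=0}^{k-1}A^{j}\big(B_u\max_{h\geq0}\|u(h)\| + B_b\|b_c\|\big)$. The hypothesis $\rho(A)<1$ guarantees $A^k\to 0$ geometrically, i.e. $\|A^k\|\leq M\lambda^k$ for some $M\geq1$, $\lambda\in(0,1)$, which furnishes a term of the form $\beta(\|\chi_0\|,k)\in\mathcal{KL}$, and the convergence of the Neumann series $\sum_{j\geq0}A^j=(I-A)^{-1}$, which bounds the remaining sum and yields the linear (hence $\mathcal{K}_\infty$) gains $\gamma_v$ and $\gamma_b$ acting on $\max_{h\geq0}\|u(h)\|$ and $\|b_c\|$. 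Summing the two entries of $z(k)$ and using $\|\chi\|\leq\|x\|+\|\xi\|$ then delivers \eqref{eq:ISS}.

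The main obstacle I anticipate is the comparison step: one must carefully justify that the scalar norm recursion is genuinely dominated by the linear system with the non-negative matrix $A$ — so that the entry-wise inequalities survive iteration — and then translate the spectral condition $\rho(A)<1$ into the explicit geometric decay and the existence of $(I-A)^{-1}$ needed to assemble admissible $\mathcal{KL}$ and $\mathcal{K}_\infty$ functions. By contrast, the preliminary per-entry bounds, though tedious, are essentially mechanical consequences of \eqref{eq:model:bound:sigma_g_f}--\eqref{eq:model:bound:sigma_c_c} together with the Lipschitz constants of $\sigma_g$ and $\sigma_c$.
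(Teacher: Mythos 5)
Your proposal is correct and follows essentially the same route as the paper's own proof: per-component norm bounds on $x^+$ and $\xi^+$ assembled into the comparison system $z^+\leq Az+B_u\|u\|+B_b\|b_c\|$, iterated and closed via Schur stability of $A$ (geometric decay plus the Neumann series $(I_2-A)^{-1}$). Your explicit justification that the entry-wise inequalities survive iteration because $A$ is non-negative is a point the paper leaves implicit, and is a welcome addition rather than a deviation.
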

\begin{proposition}\label{prop:iss:schur}
	The Schur stability of the matrix $A$ defined in \eqref{eq:iss:A} is ensured if the following inequality holds:
	\begin{equation}\label{eq:iss:condition}
	\bar{\sigma}_g^f + \bar{\sigma}_g^o \bar{\sigma}_g^i \| U_c \| < 1.
	\end{equation} \smallskip
\end{proposition}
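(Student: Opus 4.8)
The plan is to exploit the special algebraic structure of the matrix $A$ in \eqref{eq:iss:A} rather than to bound its spectral radius through norm inequalities. The first observation I would make is that the second row of $A$ is exactly $\bar{\sigma}_g^o$ times the first row: indeed $\begin{bmatrix} \bar{\sigma}_g^o \bar{\sigma}_g^f & \bar{\sigma}_g^o \bar{\sigma}_g^i \|U_c\| \end{bmatrix} = \bar{\sigma}_g^o \begin{bmatrix} \bar{\sigma}_g^f & \bar{\sigma}_g^i \|U_c\| \end{bmatrix}$. Hence $A$ has rank at most one, and in particular $\det(A) = 0$. This can also be checked directly, since $\det(A) = \bar{\sigma}_g^f \cdot \bar{\sigma}_g^o \bar{\sigma}_g^i \|U_c\| - \bar{\sigma}_g^i \|U_c\| \cdot \bar{\sigma}_g^o \bar{\sigma}_g^f = 0$.

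Next I would write the characteristic polynomial of the $2\times 2$ matrix $A$, which reads $\lambda^2 - \text{tr}(A)\,\lambda + \det(A) = \lambda^2 - \text{tr}(A)\,\lambda$, having used $\det(A)=0$. Its roots are therefore $\lambda_1 = 0$ and $\lambda_2 = \text{tr}(A) = \bar{\sigma}_g^f + \bar{\sigma}_g^o \bar{\sigma}_g^i \|U_c\|$. Since every entry appearing in $A$ is nonnegative (the $\bar{\sigma}_g$ terms lie in $(0,1)$ by \eqref{eq:model:bounds_sigmas} and $\|U_c\|\geq 0$), the trace is nonnegative, and thus $\rho(A) = \max\{0, \lambda_2\} = \bar{\sigma}_g^f + \bar{\sigma}_g^o \bar{\sigma}_g^i \|U_c\|$.

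The conclusion is then immediate: imposing \eqref{eq:iss:condition}, i.e. $\bar{\sigma}_g^f + \bar{\sigma}_g^o \bar{\sigma}_g^i \|U_c\| < 1$, is precisely the statement $\rho(A) < 1$, which is Schur stability. In fact this argument shows that \eqref{eq:iss:condition} is not merely sufficient but also necessary. The only genuine obstacle is spotting the rank-one structure of $A$ at the outset; once the proportionality of the two rows is noticed, the determinant vanishes and the remaining work is a one-line computation of the characteristic polynomial, requiring no norm estimates or eigenvalue bounds.
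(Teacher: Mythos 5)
Your proof is correct, but it takes a different route from the paper's. The paper proves this proposition by invoking its Lemma 1 (a general necessary-and-sufficient Schur stability test for $2\times 2$ matrices, derived from Jury's criterion): with $a = -\mathrm{tr}(A)$ and $b = \det(A) = 0$, the lemma's condition $-1-a<b<1$ collapses to $\mathrm{tr}(A)<1$, i.e. \eqref{eq:iss:condition}. You instead bypass the lemma entirely by spotting that the second row of $A$ is $\bar{\sigma}_g^o$ times the first, so $A$ is rank one, its characteristic polynomial factors as $\lambda\left(\lambda - \mathrm{tr}(A)\right)$, and $\rho(A) = \bar{\sigma}_g^f + \bar{\sigma}_g^o \bar{\sigma}_g^i \| U_c \|$ exactly. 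Both arguments hinge on the same fact ($\det(A)=0$) and both show the condition is necessary as well as sufficient, matching the paper's own remark to that effect. What your approach buys is self-containedness and an explicit closed-form expression for the spectral radius, which is arguably cleaner here. What the paper's approach buys is reusability: Lemma 1 is applied again to $A_\delta$ in Proposition \ref{prop:deltaiss:schur} and to $\hat{A}$ in Proposition \ref{prop:obsv:tuning}, where the determinant does \emph{not} vanish (e.g. $\det(A_\delta) = \tfrac{1}{4}\bar{\sigma}_g^f\bar{\sigma}_c^x\|U_o\|$), so your rank-one shortcut is unavailable there and the general criterion is needed anyway.
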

}

\begin{theorem}\label{thm:deltaISS}
	Denoting $$ \alpha= \frac{1}{4}\|U_f\|   \frac{ \bar{\sigma}_g^i\bar{\sigma}_c^c}{1-\bar{\sigma}_g^f}+   \bar{\sigma}_g^i   \|U_c\|   + \frac{1}{4} \|U_i\|    \bar{\sigma}_c^c    ,$$
	the LSTM network \eqref{eq:model:lstm} is $\delta$ISS with respect to the inputs $u_1$ and $u_2$ if $\rho(A_\delta)<1$, where
	\begin{equation}\label{eq:deltaiss:A}
	A_\delta=\begin{bmatrix}
	\bar{\sigma}_g^f & \alpha\\
	\bar{\sigma}_g^o\bar{\sigma}_g^f & \,\,\,  \alpha\bar{\sigma}_g^o + \frac{1}{4}\bar{\sigma}_c^x\|U_o\|\end{bmatrix}.
	\end{equation}
\end{theorem}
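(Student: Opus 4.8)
The plan is to mirror the cascade structure of \eqref{eq:model:lstm} and track how the state difference between two trajectories propagates in one step. I would consider two solutions of \eqref{eq:model:lstm} driven by inputs $u_1,u_2\in\mathcal{U}$ and starting from $\chi_{01},\chi_{02}$, and introduce the difference variables $\Delta x=x_1-x_2$, $\Delta\xi=\xi_1-\xi_2$, $\Delta u=u_1-u_2$. Writing each gate compactly, e.g. $f_\ell=\sigma_g(W_f u_\ell+U_f\xi_\ell+b_f)$, $i_\ell$, $o_\ell$, $c_\ell$, the workhorse is the add-and-subtract splitting
\begin{equation*}
f_1\circ x_1-f_2\circ x_2=f_1\circ\Delta x+(f_1-f_2)\circ x_2,
\end{equation*}
and analogously for the products $i_\ell\circ c_\ell$ and $o_\ell\circ\sigma_c(x_\ell^+)$. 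The first factor in each split is controlled by the gate-saturation bounds \eqref{eq:model:bound:sigma_g_f}--\eqref{eq:model:bound:sigma_c_c}, while the gate-difference factors $f_1-f_2$, $i_1-i_2$, $o_1-o_2$, $c_1-c_2$ are controlled entrywise by the Lipschitz constants $L_g=\tfrac14$ and $L_c=1$, together with $|(W_\bullet\Delta u+U_\bullet\Delta\xi)_{(j)}|$.

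For the hidden-state equation \eqref{eq:model:lstm_x}, applying the triangle inequality to the split, using the invariant-set bound $|x_{2,(j)}|\le\bar\sigma_g^i\bar\sigma_c^c/(1-\bar\sigma_g^f)$ from \eqref{eq:model:bound_x}, and passing from the entrywise estimates to the induced $2$-norm via $\|W_f\Delta u+U_f\Delta\xi\|\le\|W_f\|\,\|\Delta u\|+\|U_f\|\,\|\Delta\xi\|$, I would collect the coefficient of $\|\Delta x\|$, which is $\bar\sigma_g^f$, and the coefficient of $\|\Delta\xi\|$, which is exactly
\begin{equation*}
\tfrac14\|U_f\|\tfrac{\bar\sigma_g^i\bar\sigma_c^c}{1-\bar\sigma_g^f}+\bar\sigma_g^i\|U_c\|+\tfrac14\|U_i\|\bar\sigma_c^c=\alpha,
\end{equation*}
the three summands arising respectively from $(f_1-f_2)\circ x_2$, $i_1\circ(c_1-c_2)$, and $(i_1-i_2)\circ c_2$. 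Gathering the residual $\|\Delta u\|$-terms into a constant $b_x$ yields $\|\Delta x^+\|\le\bar\sigma_g^f\|\Delta x\|+\alpha\|\Delta\xi\|+b_x\|\Delta u\|$, i.e. the first row of $A_\delta$.

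For the output-state equation \eqref{eq:model:lstm_xi}, the same splitting on $o_\ell\circ\sigma_c(x_\ell^+)$, together with $L_c=1$ and the bound $|\sigma_c(x_2^+)_{(j)}|\le\bar\sigma_c^x$ from \eqref{eq:model:bound:sigma_c_x}, gives $\|\Delta\xi^+\|\le\bar\sigma_g^o\|\Delta x^+\|+\tfrac14\bar\sigma_c^x\|U_o\|\|\Delta\xi\|+b_\xi\|\Delta u\|$. The decisive coupling step is to substitute the bound just derived for $\|\Delta x^+\|$ — note that $\xi^+$ depends on $x^+$, not on $x$ — which produces the terms $\bar\sigma_g^o\bar\sigma_g^f\|\Delta x\|$ and $(\bar\sigma_g^o\alpha+\tfrac14\bar\sigma_c^x\|U_o\|)\|\Delta\xi\|$, reproducing exactly the second row of $A_\delta$ in \eqref{eq:deltaiss:A}. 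Stacking these estimates into the nonnegative vector $z(k)=[\,\|\Delta x(k)\|\ \ \|\Delta\xi(k)\|\,]^T$ yields the componentwise comparison recursion $z(k+1)\le A_\delta z(k)+B_\delta\|\Delta u(k)\|$. Since $A_\delta$ has nonnegative entries, this inequality propagates monotonically, so iteration gives $z(k)\le A_\delta^k z(0)+\sum_{j=0}^{k-1}A_\delta^{k-1-j}B_\delta\|\Delta u(j)\|$; the assumption $\rho(A_\delta)<1$ makes $A_\delta^k z(0)$ decay geometrically, furnishing $\beta_\delta\in\mathcal{KL}$ applied to $\|\chi_{01}-\chi_{02}\|$, while the convolution sum is bounded by $(I-A_\delta)^{-1}B_\delta\max_h\|\Delta u(h)\|$, furnishing $\gamma_\delta\in\mathcal{K}_\infty$; finally $\|\chi_1-\chi_2\|\le\|\Delta x\|+\|\Delta\xi\|=\mathbf{1}^Tz(k)$ recovers \eqref{eq:deltaISS}.

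I expect the main obstacle to be the precise componentwise bookkeeping that assembles the three gate-difference contributions into exactly $\alpha$, and the justification that the scalar bounds lift to a valid \emph{vector} recursion: the monotone propagation of $z(k+1)\le A_\delta z(k)$ relies crucially on the nonnegativity of $A_\delta$, without which the comparison argument would break down. The remaining $\|\Delta u\|$-bookkeeping and the passage from the Schur condition to the $\mathcal{KL}$/$\mathcal{K}_\infty$ bounds are standard once the recursion is in place.
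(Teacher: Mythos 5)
Your proposal reproduces the paper's own proof essentially step for step: the same add-and-subtract splitting of the Hadamard products, the same use of the gate-saturation bounds, Lipschitz constants, and the invariant-set bound on $x_2$ to assemble $\alpha$, the same substitution of the $\|\Delta x^+\|$ estimate into the $\xi$-equation to produce the second row of $A_\delta$, and the same stacked nonnegative comparison recursion iterated under $\rho(A_\delta)<1$ to obtain the $\mathcal{KL}$ and $\mathcal{K}_\infty$ bounds. Your explicit remark that the monotone propagation of the vector recursion hinges on the nonnegativity of $A_\delta$ is a point the paper leaves implicit, but it does not change the route.
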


\begin{proposition}\label{prop:deltaiss:schur}
	The Schur stability of the matrix $A_\delta$ defined in \eqref{eq:deltaiss:A} is ensured if the following inequalities hold:
	\begin{equation} \label{eq:deltaiss:condition}
	-1+ \bar{\sigma}_g^f +\alpha\bar{\sigma}_g^o + \frac{1}{4}\bar{\sigma}_c^x\|U_o\|  <\frac{1}{4}\bar{\sigma}_g^f\bar{\sigma}_c^x\|U_o\| < 1.
	\end{equation} 
\end{proposition}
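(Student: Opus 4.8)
The plan is to exploit that $A_\delta$ in \eqref{eq:deltaiss:A} is a $2\times2$ \emph{nonnegative} matrix, so that Schur stability can be decided exactly from the two coefficients of its characteristic polynomial. First I would check nonnegativity of every entry: $\bar{\sigma}_g^f,\bar{\sigma}_g^o\in(0,1)$ by \eqref{eq:bound_sigmag}, $\bar{\sigma}_c^x\ge0$ by \eqref{eq:model:bound:sigma_c_x}, $\|U_o\|\ge0$, and $\alpha>0$ because it is a sum of nonnegative terms (recall $\bar{\sigma}_g^f<1$, so $\bar{\sigma}_g^i\bar{\sigma}_c^c/(1-\bar{\sigma}_g^f)\ge0$). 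Hence all entries $a_{ij}\ge0$, and in particular $\operatorname{tr}(A_\delta)\ge0$ and $\det(A_\delta)\ge0$.

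Next I would compute the trace $\tau:=\operatorname{tr}(A_\delta)=\bar{\sigma}_g^f+\alpha\bar{\sigma}_g^o+\tfrac14\bar{\sigma}_c^x\|U_o\|$, which is exactly the expression on the left of \eqref{eq:deltaiss:condition}, and the determinant $d:=\det(A_\delta)$. Expanding, $d=\bar{\sigma}_g^f\big(\alpha\bar{\sigma}_g^o+\tfrac14\bar{\sigma}_c^x\|U_o\|\big)-\alpha\,\bar{\sigma}_g^o\bar{\sigma}_g^f$, and the two $\alpha\bar{\sigma}_g^o\bar{\sigma}_g^f$ contributions cancel, leaving $d=\tfrac14\bar{\sigma}_g^f\bar{\sigma}_c^x\|U_o\|$, precisely the central term of \eqref{eq:deltaiss:condition}. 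This cancellation is the only genuinely algebraic step.

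With $\tau$ and $d$ available, I would apply the Jury (Schur--Cohn) criterion to $p(\lambda)=\lambda^2-\tau\lambda+d$: its roots lie strictly inside the unit disk iff $|d|<1$, $p(1)=1-\tau+d>0$, and $p(-1)=1+\tau+d>0$. Here the nonnegativity proved above does the essential work: since $\tau,d\ge0$, the condition $p(-1)>0$ holds automatically, and $d\ge0>-1$ reduces $|d|<1$ to $d<1$. Only two requirements therefore remain, $d<1$ and $1-\tau+d>0$ (equivalently $-1+\tau<d$), and these are exactly the right and left inequalities of \eqref{eq:deltaiss:condition}, which proves the claim.

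I expect the main obstacle to be conceptual rather than computational: one must justify why the \emph{two} inequalities in \eqref{eq:deltaiss:condition} are sufficient, given that a general $2\times2$ Schur test involves three conditions. The resolution is exactly the nonnegativity of $A_\delta$, which makes the remaining Jury conditions redundant. Should one prefer to avoid invoking the Jury test, an equivalent argument notes that nonnegativity gives $a_{12}a_{21}\ge0$, so the discriminant $\tau^2-4d=(a_{11}-a_{22})^2+4a_{12}a_{21}\ge0$; both eigenvalues are then real and $\rho(A_\delta)=(\tau+\sqrt{\tau^2-4d})/2$. From the two inequalities of \eqref{eq:deltaiss:condition} one has $\tau<1+d<2$, whence $2-\tau>0$ and $\tau^2-4d<(2-\tau)^2$; taking square roots gives $\sqrt{\tau^2-4d}<2-\tau$, i.e. $\rho(A_\delta)<1$, closing the proof.
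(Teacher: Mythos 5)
Your proof is correct and follows, in substance, the same route as the paper: the paper first proves a dedicated result (Lemma 1, via Jury's criterion) stating that a $2\times 2$ matrix is Schur stable iff $-1-a<b<1$, with $a=-\operatorname{tr}(A_\delta)$ and $b=\det(A_\delta)$, and then applies it to $A_\delta$ exactly as you do, including the same key cancellation $\det(A_\delta)=\tfrac{1}{4}\bar{\sigma}_g^f\bar{\sigma}_c^x\|U_o\|$. The one respect in which your write-up is genuinely more careful is the reduction from three Jury conditions to two: for a general real $2\times 2$ matrix one needs $|b|<1$, $p(1)>0$ \emph{and} $p(-1)>0$, and the paper's Lemma 1 as stated omits the last condition (as stated for arbitrary real matrices it is in fact false, e.g.\ for $\mathrm{diag}(-2,0)$); its proof quietly invokes ``recalling that $a<0$'', i.e.\ positive trace, without listing it as a hypothesis. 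Your explicit observation that $A_\delta$ has nonnegative entries, so that $\tau\geq 0$ and $d\geq 0$ render $p(-1)>0$ and $d>-1$ automatic, supplies precisely the justification the paper leaves implicit. Your alternative closing argument --- real eigenvalues from $a_{12}a_{21}\geq 0$, $\rho(A_\delta)=\bigl(\tau+\sqrt{\tau^2-4d}\bigr)/2$, and $\sqrt{\tau^2-4d}<2-\tau$ from the two stated inequalities --- is a correct, self-contained variant that avoids citing Jury's criterion altogether, at the price of being specific to nonnegative matrices; it would serve equally well as a replacement for Lemma 1 in this application.
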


\rev{It is worth noting that the $\delta$ISS property \cite{Discrete-timedeltaISS} implies the ISS property \cite{jiang2001input}. 
Indeed, this relationship also holds for the sufficient criteria of Proposition \ref{prop:iss:schur} and Proposition \ref{prop:deltaiss:schur}, as stated in Corollary \ref{prop:deltaiss_implies_iss}.
\begin{corollary} \label{prop:deltaiss_implies_iss}
	Given the LSTM network \eqref{eq:model:lstm}, the satisfaction of the $\delta$ISS sufficient condition \eqref{eq:deltaiss:condition} in  Proposition \ref{prop:deltaiss:schur} implies the satisfaction of the ISS sufficient condition \eqref{eq:iss:condition} in Proposition \ref{prop:iss:schur}.
\end{corollary}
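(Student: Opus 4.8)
The plan is to argue purely algebraically, showing that the left inequality of the $\delta$ISS sufficient condition \eqref{eq:deltaiss:condition} already entails the ISS sufficient condition \eqref{eq:iss:condition} as a weaker consequence; the right inequality of \eqref{eq:deltaiss:condition} is not needed and can be discarded. First I would move every term of the left inequality $-1+ \bar{\sigma}_g^f +\alpha\bar{\sigma}_g^o + \frac{1}{4}\bar{\sigma}_c^x\|U_o\| < \frac{1}{4}\bar{\sigma}_g^f\bar{\sigma}_c^x\|U_o\|$ to one side and collect the two $\frac{1}{4}\bar{\sigma}_c^x\|U_o\|$ contributions, obtaining the equivalent form
\begin{equation*}
\bar{\sigma}_g^f +\alpha\bar{\sigma}_g^o + \frac{1}{4}\bar{\sigma}_c^x\|U_o\|\left(1 - \bar{\sigma}_g^f\right) < 1.
\end{equation*}

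Next I would exploit nonnegativity. Since $\bar{\sigma}_g^f = \sigma_g(\cdot) \in (0,1)$ by \eqref{eq:bound_sigmag}, the factor $1-\bar{\sigma}_g^f$ is strictly positive, so the whole term $\frac{1}{4}\bar{\sigma}_c^x\|U_o\|(1-\bar{\sigma}_g^f)$ is nonnegative. Dropping it can only decrease the left-hand side, which yields $\bar{\sigma}_g^f +\alpha\bar{\sigma}_g^o < 1$.

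The final step is a lower bound on $\alpha$. Reading its definition, $\alpha$ is a sum of three nonnegative terms, one of which is exactly $\bar{\sigma}_g^i\|U_c\|$; discarding the other two gives $\alpha \geq \bar{\sigma}_g^i\|U_c\|$, and multiplying by $\bar{\sigma}_g^o \geq 0$ produces $\alpha\bar{\sigma}_g^o \geq \bar{\sigma}_g^o\bar{\sigma}_g^i\|U_c\|$. Chaining this with $\bar{\sigma}_g^f +\alpha\bar{\sigma}_g^o < 1$ gives $\bar{\sigma}_g^f + \bar{\sigma}_g^o\bar{\sigma}_g^i\|U_c\| < 1$, which is precisely \eqref{eq:iss:condition}, completing the implication.

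I do not anticipate a genuine obstacle: the argument rests entirely on nonnegativity of the defined quantities and on discarding nonnegative terms. The only point requiring care is confirming $\bar{\sigma}_g^f < 1$ so that $1-\bar{\sigma}_g^f > 0$, but this is immediate from the codomain of the logistic activation established in \eqref{eq:bound_sigmag}.
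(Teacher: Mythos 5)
Your proof is correct and takes essentially the same route as the paper's: both rearrange the left inequality of \eqref{eq:deltaiss:condition}, exploit $\bar{\sigma}_g^f \in (0,1)$ so that the terms involving $\|U_o\|$, $\|U_f\|$, and $\|U_i\|$ are nonnegative and can be discarded, and keep the $\bar{\sigma}_g^o\bar{\sigma}_g^i\|U_c\|$ contribution hidden inside $\alpha\bar{\sigma}_g^o$ to land exactly on \eqref{eq:iss:condition}. The only difference is cosmetic: you collect the two $\|U_o\|$ terms into a single $\frac{1}{4}\bar{\sigma}_c^x\|U_o\|(1-\bar{\sigma}_g^f)$ factor and expand $\alpha$ last, whereas the paper expands $\alpha$ first and moves all discarded terms to a manifestly negative right-hand side.
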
}

\begin{remark} \label{remark:train}
The conditions of Proposition \ref{prop:iss:schur} and Proposition \ref{prop:deltaiss:schur} are explicit functions of the LSTM parameters. 
\rev{They can be either checked to a-posteriori verify the ISS and $\delta$ISS properties of the trained network, or they can be enforced during the training procedure. In the latter case, depending on the training algorithm, they can be stated as hard nonlinear constraints, or they can be relaxed by moving the constraint residual to the loss function of the training algorithm \cite{goodfellow2016deep}. Additional details on this procedure are reported in Section \ref{sec:example}.}
\end{remark}

\section{Control design} \label{sec:control}

\subsection{Observer design} \label{sec:obsv}
\begin{figure}
	\centering
	\includegraphics[scale=0.5]{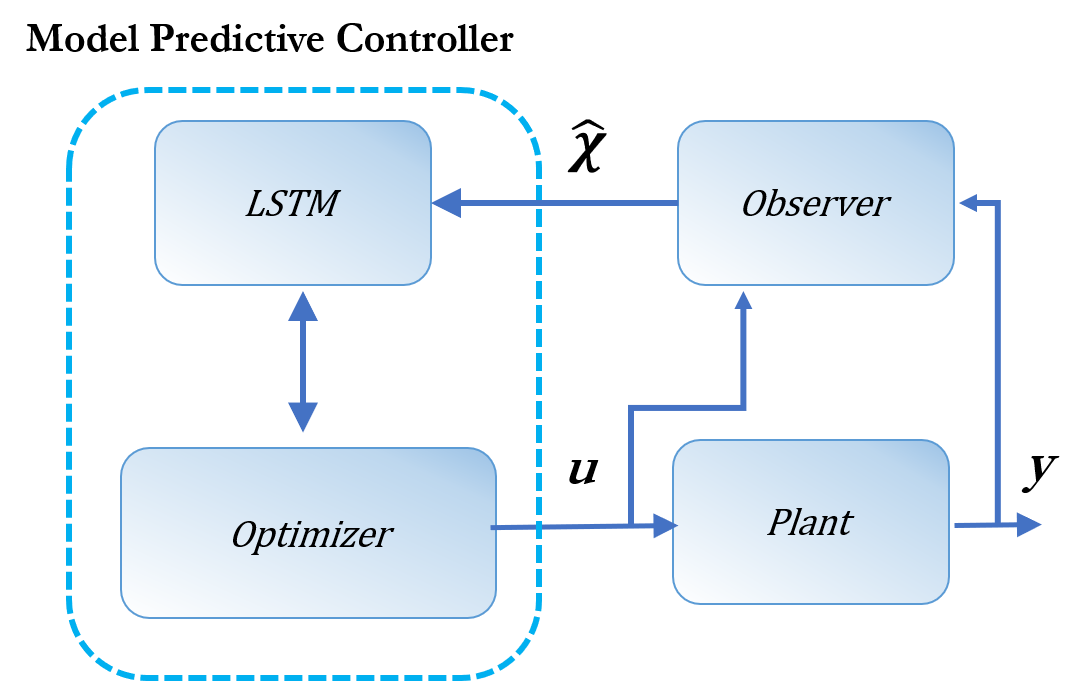}
	\caption{Control architecture}
	\label{fig:control_scheme}
\end{figure}
The use of the LSTM network for model predictive control purposes calls for the availability of a state estimate of the plant, as represented in Figure \ref{fig:control_scheme}. We propose the use of a tailored observer, guaranteeing a fast convergence of such estimate to the real state value.
The observer is a dynamical system with state $ \hat{\chi}= \begin{bmatrix}\hat{x}^T & \hat{\xi}^T \end{bmatrix}^T $ and output estimate $\hat{y}$, taking the following form:
\begin{equation} \label{eq:obsv:model}
\begin{aligned}
	\hat{x}^+	 =& 	\sigma_g[W_f u + U_f \hat{\xi} + b_f + L_f(y-\hat{y}) ] \circ \hat{x} + \sigma_g[W_i u + U_i \hat{\xi} + b_i + L_i(y-\hat{y}) ] \circ \sigma_c(W_c u + U_c \hat{\xi} + b_c) \\
	\hat{\xi}^+	 =&   \sigma_g[W_o u + U_o \hat{\xi} + b_o + L_o(y-\hat{y}) ] \circ \sigma_c(\hat{x}^+) \\
	\hat{y}      =&   C\hat{\xi} + b_y 
\end{aligned}
\end{equation}
where $L_f,L_i$ and $L_o \in \mathbb{R}^{n_x \times n_y}$ are suitable observer gains to be properly selected. 
In the following, theoretical results concerning the design of the state observer are reported, while the corresponding proofs are detailed in the Appendix.
\begin{theorem}\label{thm:obsv_convergence}
If the plant behaves according to \eqref{eq:model:lstm} and $\rho(A_\delta) < 1$, the observer \eqref{eq:obsv:model} with gains $L_f$, $L_i$, and $L_o$, provides a converging state estimation, i.e. $\hat{\chi}(k) \to \chi(k)$ as $k\to \infty$, if $\rho(\hat{A}) < 1$, with
\begin{equation}\label{eq:obsv:A_hat}
	\hat{A} = \hat{A}(L_f, L_i, L_o) = \begin{bmatrix}
		\hat{\bar{\sigma}}_g^f & \hat{\alpha} \\
		\hat{\bar{\sigma}}_g^f \hat{\bar{\sigma}}_g^o &  \quad\frac{1}{4}  \bar{\sigma}_c^x \| U_o - L_o C \| + \hat{\bar{\sigma}}_g^o \hat{\alpha}
	\end{bmatrix},
\end{equation}
where
\begin{align} \label{eq:obsv:sigma_g_f}
	\sigma_g\left[W_f u + U_f \hat{\xi} + b_f + L_f(y - \hat{y}) \right] \leq \hat{\bar{\sigma}}_g^f &= \sigma_g( \| [ W_f u_{\scriptscriptstyle \text{max}} \quad U_f \! - \! L_f C \quad b_f \quad L_f C ] \|_\infty ), \\
	\sigma_g\left[W_i u + U_i \hat{\xi} + b_i + L_i(y - \hat{y}) \right] \leq \hat{\bar{\sigma}}_g^i &= \sigma_g( \| [ W_i u_{\scriptscriptstyle \text{max}} \quad U_i \! - \! L_i C \quad b_i \quad L_i C ] \|_\infty ), \\
	\sigma_g\left[W_o u + U_o \hat{\xi} + b_o + L_o(y - \hat{y}) \right] \leq \hat{\bar{\sigma}}_g^o &= \sigma_g( \| [ W_o u_{\scriptscriptstyle \text{max}} \quad U_o \! - \! L_o C \quad b_o \quad L_o C ] \|_\infty ), \label{eq:obsv:sigma_g_o} \\
 	\hat{\alpha} &= \frac{1}{4}\|U_f - L_f C\|   \frac{ \bar{\sigma}_g^i\bar{\sigma}_c^c}{1-\bar{\sigma}_g^f}+   \bar{\sigma}_g^i   \|U_c\|   + \frac{1}{4} \|U_i - L_i C\|    \bar{\sigma}_c^c. \label{eq:obsv:alpha_hat}
\end{align}
\end{theorem}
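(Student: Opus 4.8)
The plan is to transpose the $\delta$ISS analysis of Theorem \ref{thm:deltaISS} from a pair of plant trajectories to the plant/observer pair, treating the estimation error as the quantity to be contracted. First I would introduce the error variables $e_x = x - \hat{x}$ and $e_\xi = \xi - \hat{\xi}$ and exploit the crucial simplification $y - \hat{y} = C\xi - C\hat{\xi} = C e_\xi$, so that every output-injection term in \eqref{eq:obsv:model} reads $L_\bullet(y-\hat{y}) = L_\bullet C\, e_\xi$. With this, each observer gate argument rewrites as $W_\bullet u + (U_\bullet - L_\bullet C)\hat{\xi} + L_\bullet C\,\xi + b_\bullet$; bounding it entrywise over $u\in\mathcal{U}$ and $\xi,\hat{\xi}\in(-1,1)^{n_x}$, exactly as in \eqref{eq:model:bound:sigma_g_f}, produces the saturation bounds $\hat{\bar{\sigma}}_g^f,\hat{\bar{\sigma}}_g^i,\hat{\bar{\sigma}}_g^o$ of \eqref{eq:obsv:sigma_g_f}--\eqref{eq:obsv:sigma_g_o}. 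I would also record that, since $A_\delta$ is entrywise nonnegative, $\rho(A_\delta)<1$ forces its diagonal entry $\bar{\sigma}_g^f<1$, so the plant invariant set $\mathcal{X}$ of \eqref{eq:model:bound_x} and the constant $\bar{\sigma}_c^x$ of \eqref{eq:model:bound:sigma_c_x} are well defined and the plant state satisfies $x\in\mathcal{X}$, $\xi\in(-1,1)^{n_x}$ along the whole trajectory; these plant-side bounds are what the error estimate will rely on.

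Next I would derive the hidden-state error recursion from \eqref{eq:model:lstm_x} and \eqref{eq:obsv:model}, inserting intermediate terms to isolate one difference at a time, as done for the forget/input/cell products in Theorem \ref{thm:deltaISS}. In the forget-gate product I would add and subtract $\sigma_g[\,\cdot\,]_{\mathrm{obs}}\circ x$, so that the \emph{observer} gate (bounded by $\hat{\bar{\sigma}}_g^f$) multiplies $e_x$, while the gate difference, whose argument difference is $(U_f-L_f C)e_\xi$ and is therefore Lipschitz-bounded by $\tfrac14\|U_f-L_f C\|\,\|e_\xi\|$, multiplies the plant state $x$, itself bounded by $\tfrac{\bar{\sigma}_g^i\bar{\sigma}_c^c}{1-\bar{\sigma}_g^f}$. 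Splitting the input/cell product by adding and subtracting $\sigma_g(W_i u+U_i\xi+b_i)\circ\sigma_c(W_c u+U_c\hat{\xi}+b_c)$ yields the remaining two contributions $\bar{\sigma}_g^i\|U_c\|\,\|e_\xi\|$ and $\tfrac14\|U_i-L_i C\|\,\bar{\sigma}_c^c\,\|e_\xi\|$. Collecting these gives precisely $\|e_x^+\|\le \hat{\bar{\sigma}}_g^f\|e_x\|+\hat{\alpha}\|e_\xi\|$, with $\hat{\alpha}$ as in \eqref{eq:obsv:alpha_hat}, i.e. the first row of $\hat{A}$.

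For the cell error I would process $e_\xi^+$ from \eqref{eq:model:lstm_xi}/\eqref{eq:obsv:model} by adding and subtracting $\sigma_g[\,\cdot\,]_{\mathrm{obs},o}\circ\sigma_c(x^+)$. The gate-difference term carries argument difference $(U_o-L_o C)e_\xi$ and multiplies $\sigma_c(x^+)$ (bounded by $\bar{\sigma}_c^x$), contributing $\tfrac14\bar{\sigma}_c^x\|U_o-L_o C\|\,\|e_\xi\|$; the companion term $\sigma_g[\,\cdot\,]_{\mathrm{obs},o}\circ[\sigma_c(x^+)-\sigma_c(\hat{x}^+)]$ is bounded, using $\hat{\bar{\sigma}}_g^o$ and $L_c=1$, by $\hat{\bar{\sigma}}_g^o\|e_x^+\|\le\hat{\bar{\sigma}}_g^o(\hat{\bar{\sigma}}_g^f\|e_x\|+\hat{\alpha}\|e_\xi\|)$, reusing the first-row estimate. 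Summing reproduces the second row of $\hat{A}$ and yields the stacked elementwise inequality $[\,\|e_x^+\|;\,\|e_\xi^+\|\,]\le \hat{A}\,[\,\|e_x\|;\,\|e_\xi\|\,]$.

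Finally, since $\hat{A}$ has nonnegative entries, this inequality may be iterated to obtain $[\,\|e_x(k)\|;\,\|e_\xi(k)\|\,]\le \hat{A}^{\,k}\,[\,\|e_x(0)\|;\,\|e_\xi(0)\|\,]$; as $\rho(\hat{A})<1$ implies $\hat{A}^{\,k}\to 0$, we conclude $\|e_x(k)\|,\|e_\xi(k)\|\to 0$, i.e. $\hat{\chi}(k)\to\chi(k)$. The step I expect to be the main obstacle is the bookkeeping in these add-and-subtract manipulations: for each product one must decide which factor is controlled by a plant quantity ($\bar{\sigma}_\bullet$, the invariant-set radius, $\bar{\sigma}_c^x$) and which by an observer quantity ($\hat{\bar{\sigma}}_\bullet$), because that choice is exactly what determines whether a hatted or un-hatted constant appears in $\hat{A}$. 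Selecting the intermediate terms so the estimate closes on the specific matrix \eqref{eq:obsv:A_hat} --- rather than on a cruder, possibly non-Schur bound --- is the delicate point; the rest is the Lipschitz-plus-boundedness machinery already used for Theorem \ref{thm:deltaISS}.
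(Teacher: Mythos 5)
Your proposal is correct and takes essentially the same approach as the paper's proof: the same error variables $e_x$, $e_\xi$ with $y-\hat{y}=Ce_\xi$, the same add-and-subtract decompositions that attach hatted bounds to observer gates and un-hatted bounds to plant quantities (the plant invariant-set radius, $\bar{\sigma}_g^i$, $\bar{\sigma}_c^c$, $\bar{\sigma}_c^x$), yielding the elementwise recursion $\bigl[\|e_x^+\|;\,\|e_\xi^+\|\bigr]\leq \hat{A}\bigl[\|e_x\|;\,\|e_\xi\|\bigr]$ and convergence from $\rho(\hat{A})<1$. The only differences are cosmetic: you spell out the iteration of the nonnegative matrix inequality, whereas the paper closes by referring back to the argument in the proof of Theorem \ref{thm:iss}.
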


Note that the bounds \eqref{eq:obsv:sigma_g_f}-\eqref{eq:obsv:sigma_g_o} could be slightly more conservative than \eqref{eq:model:bound:sigma_g_f}-\eqref{eq:model:bound:sigma_g_o}, due to the presence of extra terms depending on $L_f$, $L_i$ and $L_o$.

\begin{proposition} \label{prop:obsv:tuning}
	A suitable tuning of state observer \eqref{eq:obsv:model}, guaranteeing the convergence of the state estimate, can be found solving the nonlinear optimization problem
	\begin{subequations}\label{eq:obsv:tuning}
		\begin{align}
		L_f^{*}, L_i^{*}, L_o^{*} = \arg\min_{L_f, L_i, L_o} & \| \hat{A} \| \\
		\text{s.t.} \quad & -1+\hat{\bar{\sigma}}_g^f +\alpha\hat{\bar{\sigma}}_g^o + \frac{1}{4}\bar{\sigma}_c^x\|U_o - L_o C\|  <\frac{1}{4}\hat{\bar{\sigma}}_g^f\bar{\sigma}_c^x\|U_o - L_o C \| < 1 \label{eq:obsv:tuning:constraint}
		\end{align}
	\end{subequations}
\end{proposition}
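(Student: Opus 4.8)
The plan is to exploit the fact that the matrix $\hat{A}$ in \eqref{eq:obsv:A_hat} is structurally identical to the matrix $A_\delta$ in \eqref{eq:deltaiss:A}: it is obtained from $A_\delta$ by the substitutions $\bar{\sigma}_g^f \mapsto \hat{\bar{\sigma}}_g^f$, $\bar{\sigma}_g^o \mapsto \hat{\bar{\sigma}}_g^o$, $\alpha \mapsto \hat{\alpha}$ and $\|U_o\| \mapsto \|U_o - L_o C\|$, where the hatted quantities are defined in \eqref{eq:obsv:sigma_g_f}--\eqref{eq:obsv:alpha_hat}. Consequently, the Schur-stability analysis carried out for $A_\delta$ in Proposition~\ref{prop:deltaiss:schur} applies verbatim to $\hat{A}$, and the constraint \eqref{eq:obsv:tuning:constraint} is nothing but the resulting sufficient condition for $\rho(\hat{A}) < 1$. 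Since the assumption $\rho(A_\delta)<1$ of Theorem~\ref{thm:obsv_convergence} holds, that theorem guarantees that any gain triple $(L_f,L_i,L_o)$ for which $\rho(\hat{A})<1$ yields a converging estimate; hence every feasible point of \eqref{eq:obsv:tuning}, and in particular its minimizer, provides a valid observer tuning.

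Concretely, I would first observe that all entries of $\hat{A}$ are nonnegative, being sums and products of the sigmoid bounds \eqref{eq:obsv:sigma_g_f}--\eqref{eq:obsv:sigma_g_o}, of $\bar{\sigma}_c^x$, of the nonnegative quantity $\hat{\alpha}$, and of the norm $\|U_o - L_o C\|$. I would then apply the Jury test to the characteristic polynomial $\lambda^2 - \operatorname{tr}(\hat{A})\lambda + \det(\hat{A})$ of the $2\times 2$ matrix $\hat{A}$. A direct computation gives $\det(\hat{A}) = \tfrac{1}{4}\hat{\bar{\sigma}}_g^f\bar{\sigma}_c^x\|U_o - L_o C\|$, the cross term $\hat{\alpha}\,\hat{\bar{\sigma}}_g^f\hat{\bar{\sigma}}_g^o$ cancelling exactly as it does for $A_\delta$, while $\operatorname{tr}(\hat{A}) = \hat{\bar{\sigma}}_g^f + \hat{\alpha}\hat{\bar{\sigma}}_g^o + \tfrac{1}{4}\bar{\sigma}_c^x\|U_o - L_o C\|$. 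Because every quantity is nonnegative, the two Jury inequalities $\det(\hat{A}) < 1$ and $\operatorname{tr}(\hat{A}) < 1 + \det(\hat{A})$ collapse to precisely the chained inequality \eqref{eq:obsv:tuning:constraint}, mirroring the derivation of \eqref{eq:deltaiss:condition}.

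It then remains to argue that \eqref{eq:obsv:tuning} is well posed, i.e. that its feasible set is nonempty, and to clarify the role of the objective. Nonemptiness follows by taking $L_f = L_i = L_o = 0$, which reduces every hatted quantity to its un-hatted counterpart, so that $\hat{A}|_{L=0} = A_\delta$ and \eqref{eq:obsv:tuning:constraint} collapses to the $\delta$ISS condition \eqref{eq:deltaiss:condition}, equivalent to the standing assumption $\rho(A_\delta)<1$; the origin is therefore always feasible. The objective $\|\hat{A}\|$ is irrelevant to convergence — feasibility alone suffices through Theorem~\ref{thm:obsv_convergence} — and is included only to favour a small contraction-rate bound, hence faster decay of the estimation error, among the admissible gains. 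The single delicate point in the whole argument is the algebraic verification that the cross term cancels in $\det(\hat{A})$, so that the two Jury inequalities reduce exactly to \eqref{eq:obsv:tuning:constraint}; since $\hat{A}$ inherits the factored structure of $A_\delta$, this cancellation carries over directly from Proposition~\ref{prop:deltaiss:schur} and no genuinely new estimate is required beyond checking nonnegativity of the entries.
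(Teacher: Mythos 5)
Your proposal is correct and follows essentially the same route as the paper: apply Lemma~\ref{lemma:schur} (the Jury criterion) to $\hat{A}$, note the cancellation giving $\det(\hat{A})=\tfrac{1}{4}\hat{\bar{\sigma}}_g^f\bar{\sigma}_c^x\|U_o-L_oC\|$, obtain \eqref{eq:obsv:tuning:constraint} as the Schur-stability condition, and invoke Theorem~\ref{thm:obsv_convergence} for convergence of any feasible gain. Your additional remarks on feasibility at $L_f=L_i=L_o=0$ and on the role of the objective $\|\hat{A}\|$ reproduce the paper's own discussion following the proposition (and your trace formula correctly uses $\hat{\alpha}$ where the paper's constraint \eqref{eq:obsv:tuning:constraint} writes $\alpha$, evidently a typo there), so nothing is missing.
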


It should be noted that \eqref{eq:obsv:tuning} always admits a feasible solution, corresponding to $L_f=L_i=L_o=0_{n_x,n_y}$. Indeed, in such case  (corresponding to an open-loop state observer) $\hat{A} = A_\delta$, which by assumption is stable. 
However, it would be advisable to employ the output measurements to have a more reliable state estimation and a faster convergence of the state prediction error to zero.
Recalling that $\rho(\hat{A}) \leq \| \hat{A} \|$, minimizing $\| \hat{A} \|$ allows to make the observer faster in the worst case, and likely faster in general, while its Schur stability is enforced via \eqref{eq:obsv:tuning:constraint}.

\subsection{Model Predictive Control design} \label{sec:mpc}
This section discusses the design of a predictive control scheme that takes advantage of the LSTM network \eqref{eq:model:lstm} as a prediction model of the system. The objective of the controller is to stabilize the system towards a generic equilibrium point denoted by the triplet $(\bar{u},\bar{\chi},\bar{y})$ (where $\bar{u} \in \mathcal{U}$ and $\bar{\chi}=\begin{bmatrix}\bar{x}^T&\bar{\xi}^T\end{bmatrix}^T$) by suitably acting on the control input $u$. Let us define
\begin{equation}\label{eq:delta_X}
\Delta=\begin{bmatrix}
\|x-\bar{x}\|\\
\|\xi-\bar{\xi}\|
\end{bmatrix}\in \mathbb{R}^2.
\end{equation}
The MPC scheme consists of solving, at each sampling time $k$, the following optimization problem 
\begin{equation}\label{eq:mpc:fist_mpc}
\begin{aligned}
	\min_{U(k)} \quad &J(U(k)), \\
	\text{s.t.} \quad & u(k+i)\in \mathcal{U} \,\,\, \text{ for } i\in \{0,\dots,N-1\}, \\
	& \rev{\chi(k+i) = f(\chi(k), u(k)) \,\,\, \text{ for } i\in \{1,\dots,N\}}, 
\end{aligned}
\end{equation}
\rev{where $f(\chi, u)$ is defined by the model dynamics \eqref{eq:model:lstm_x} and \eqref{eq:model:lstm_xi}, and $U(k)= \begin{bmatrix}u(k)&\dots& u(k+N-1)\end{bmatrix}^T$ is the sequence of future control moves, which are bounded as in \eqref{eq:model:bound_u}, consistently with Assumption \ref{ass:u_bounded}}.
The terms $\chi(k+i)$, with $i \in \{1,\dots, N\}$, are the future state predictions from the current time instant to the prediction horizon $N$. These terms are obtained by iterating \eqref{eq:model:lstm} starting from the current state estimated by the state observer \eqref{eq:obsv:model}, i.e. $\chi(k)=\hat{\chi}(k)$.
The cost function reads as
\begin{equation} \label{eq:mpc:costfunction} J(U(k))\!=\!\!\sum\limits_{i=0}^{N-1}\biggl(\|\chi(k+i)\!-\!\bar{\chi}\|_Q^2\!+\!\|u(k+i)\!-\!\bar{u}\|_R^2\biggr)\!+\! \|\Delta(k+N)\|_P^2.
\end{equation}
Matrices $Q \succ 0$ and $R \succ 0$ are tuning parameters, and matrix $P\succ 0 \in \mathbb{R}^{2\times 2}$ satisfies the Lyapunov condition - under the assumption that $\rho(A_\delta)<1$:
\begin{equation} \label{eq:mpc:P_matrix}
A_\delta^TPA_\delta - P + qI_2 \prec 0,
\end{equation} where $q=\rho(Q)$.
At time step $k$, the solution to the optimization problem is termed $U(k|k)= \begin{bmatrix}u(k|k)^T&\dots& u(k+N-1|k)^T\end{bmatrix}^T$.
Only its first element is applied to the system according to the Receding Horizon principle, i.e.,
\begin{equation} \label{eq:mpc:receiding}
u(k)=u(k|k)
\end{equation}
The following result holds, ensuring asymptotic stability of the equilibrium $(\bar{u}, \bar{\chi}, \bar{y})$ under the proposed control law \eqref{eq:mpc:receiding}.
\begin{theorem}\label{thm:mpc_stabilizing}
If the plant behaves according to \eqref{eq:model:lstm} with $\rho(A_\delta)<1$, the state observer \eqref{eq:obsv:model} is designed so as to fulfill Theorem \ref{thm:obsv_convergence}, e.g. by means of Proposition \ref{prop:obsv:tuning}, then $(\bar{u},\bar{\chi},\bar{y})$ is an asymptotically stable equilibrium under the control law \eqref{eq:mpc:receiding}.
\end{theorem}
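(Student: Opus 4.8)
The plan is to follow the classical recipe for stability of MPC \emph{without} terminal constraints, using the optimal cost \eqref{eq:mpc:costfunction} as a Lyapunov function, and then to account for the state–estimation error supplied by the observer through the $\delta$ISS property. I would first establish \emph{recursive feasibility}: since problem \eqref{eq:mpc:fist_mpc} contains only the input constraint $u\in\mathcal{U}$ and no terminal or state constraints, every sequence in $\mathcal{U}^N$ is admissible; in particular, given the optimizer $U(k|k)$ at time $k$, the sequence obtained by discarding $u(k|k)$ and appending the equilibrium input $\bar{u}$ (which lies in $\mathcal{U}$) is feasible at $k+1$ and will serve as the candidate driving the descent estimate.

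Next I would treat the \emph{nominal} case $\hat{\chi}(k)=\chi(k)$ and show that the optimal cost $V(k)=J(U(k|k))$ is a Lyapunov function. Writing $\ell(\chi,u)=\|\chi-\bar{\chi}\|_Q^2+\|u-\bar{u}\|_R^2$, evaluating $J$ on the shifted-and-appended candidate and subtracting $V(k)$, the interior stage costs telescope and, by optimality $V(k+1)\le J(\tilde U(k+1))$, one is left with
\[
V(k+1)-V(k)\le -\ell\big(\chi(k),u(k|k)\big)+\|\chi(k+N|k)-\bar{\chi}\|_Q^2+\|\Delta^+\|_P^2-\|\Delta(k+N|k)\|_P^2,
\]
where $\Delta^+$ is the norm-vector \eqref{eq:delta_X} associated with $f(\chi(k+N|k),\bar{u})$. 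The crux is that the last three terms are non-positive. Applying $\bar{u}$ to both the terminal prediction and to the equilibrium $\bar{\chi}=f(\bar{\chi},\bar{u})$, the one-step $\delta$ISS estimate that underlies Theorem~\ref{thm:deltaISS} (the very source of $A_\delta$) yields, with vanishing input-difference term, the entrywise comparison $\Delta^+\le A_\delta\,\Delta(k+N|k)$. Combining this with $\|\chi(k+N|k)-\bar{\chi}\|_Q^2\le q\,\|\Delta(k+N|k)\|^2=\|\Delta(k+N|k)\|_{qI_2}^2$ and the Lyapunov inequality \eqref{eq:mpc:P_matrix}, the bracketed terminal contribution reduces to $\Delta^T(A_\delta^T P A_\delta-P+qI_2)\Delta\le 0$, so that $V(k+1)-V(k)\le-\|\chi(k)-\bar{\chi}\|_Q^2$, which proves nominal asymptotic stability of $\bar{\chi}$.

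Finally I would fold in the observer. Theorem~\ref{thm:obsv_convergence} guarantees $\hat{\chi}(k)\to\chi(k)$, so the estimation error $e(k)=\chi(k)-\hat{\chi}(k)$ vanishes. Since the receding-horizon law \eqref{eq:mpc:receiding} is computed from $\hat{\chi}(k)$ but applied to the true plant, I would regard $e(k)$ as a vanishing disturbance and invoke the $\delta$ISS property of \eqref{eq:model:lstm} ($\rho(A_\delta)<1$) to bound the discrepancy between the true closed-loop trajectory and the nominal one driven by $\hat{\chi}$; as $e(k)\to 0$ this discrepancy vanishes and the nominal stability of the previous step is preserved, delivering asymptotic stability of $(\bar{u},\bar{\chi},\bar{y})$.

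I anticipate the main obstacle to be twofold. The delicate point in the nominal descent is justifying $\|\Delta^+\|_P^2\le\|A_\delta\Delta(k+N|k)\|_P^2$ from the \emph{entrywise} bound $\Delta^+\le A_\delta\Delta(k+N|k)$: this monotonicity of the quadratic form is not automatic for general $P\succ 0$ and requires $P$ to have non-negative entries, which I would secure by choosing $P$ as the solution of the Lyapunov equation associated with the non-negative, Schur-stable matrix $A_\delta$, so that $P=\sum_{k\ge 0}(A_\delta^T)^k(qI_2)A_\delta^k$ is entrywise non-negative. The harder conceptual step is the output-feedback part: rigorously converting ``converging observer plus nominally stabilizing MPC'' into genuine closed-loop asymptotic stability, since the applied input is optimal for the \emph{estimated} state; the $\delta$ISS property is exactly the tool that controls this mismatch, but the bookkeeping of the coupled error and state dynamics is where the real care is needed.
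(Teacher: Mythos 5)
Your overall strategy coincides with the paper's: the optimal cost as Lyapunov candidate, trivial recursive feasibility (only input constraints), the shifted optimal sequence with $\bar{u}$ appended as the suboptimal candidate at $k+1$, and the terminal terms dominated via the entrywise bound $\Delta^+ \leq A_\delta\,\Delta(k+N|k)$ (the auxiliary system of Theorem \ref{thm:deltaISS} with both inputs equal to $\bar{u}$) combined with the Lyapunov condition \eqref{eq:mpc:P_matrix}. One genuine contribution of your write-up: the observation that passing from the entrywise inequality to $\|\Delta^+\|_P^2 \leq \|A_\delta \Delta(k+N|k)\|_P^2$ is not automatic for general $P \succ 0$, and that it does hold when $P$ is entrywise non-negative, e.g. $P = \sum_{j\geq 0} (A_\delta^T)^j (qI_2) A_\delta^j$, which inherits non-negativity from $A_\delta$. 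The paper uses exactly this quadratic-form monotonicity step without comment, so your remark closes a small hole in its argument rather than creating one in yours.

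The gap is in the part you explicitly defer. Your plan splits the proof into a nominal case ($\hat{\chi}(k)=\chi(k)$) plus a perturbation argument, and ends with the claim that since $e(k)=\chi(k)-\hat{\chi}(k)\to 0$, ``the nominal stability of the previous step is preserved.'' That implication is precisely what must be proved: asymptotic stability is not in general robust to vanishing perturbations, and the MPC law here is discontinuous-in-principle (an argmin), so one cannot invoke continuity of the closed loop. The paper never works with a nominal trajectory at all; because the controller is always initialized at $\hat{\chi}(k)$, the disturbance enters as the mismatch $\varepsilon(k+1|k+1)=\hat{\chi}(k+1)-\chi(k+1|k)$ between the new estimate and the one-step-ahead prediction. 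The paper (i) bounds $\|\varepsilon(k+1|k+1)\| \leq \alpha_3 \|\chi(k)-\hat{\chi}(k)\|$ by explicit computation on the observer equations, (ii) propagates it along the shifted candidate using the $\delta$ISS decay, $\|\varepsilon(k+i+1|k+1)\| \leq \mu\lambda^i\|\varepsilon(k+1|k+1)\|$, so that the cross terms and quadratic terms in $\varepsilon$ appearing in $J^*(k+1)-J^*(k)$ are exponentially decaying in $k$, yielding $J^*(k+1)-J^*(k) \leq -\gamma_3\|\hat{\chi}(k)-\bar{\chi}\|^2 + \tilde{\beta}(\|e(0)\|,k)$, and (iii) concludes by appealing to the perturbed-MPC stability result of Scokaert, Rawlings and Meadows (Theorem 3 of that reference), which is the tool that legitimately converts ``strict descent up to a decaying perturbation'' into asymptotic stability. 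To complete your proof you would need either this explicit $\varepsilon$-bookkeeping plus such a citation, or an equivalent argument showing the optimal cost is an ISS-Lyapunov function for the estimate dynamics with respect to the estimation error; the upper bound $J^*(k) \leq \gamma_2\|\hat{\chi}(k)-\bar{\chi}\|^2$ (which the paper establishes from the all-$\bar{u}$ feasible sequence, and which your sketch omits) is also needed for that final step.
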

\begin{proof}
See the Appendix.
\end{proof}


\subsection{\rev{Model-plant mismatch and robustness}} \label{sec:scenario}
The result reported in Theorem \ref{thm:mpc_stabilizing} relies on the assumption that the plant is exactly described by the LSTM model \eqref{eq:model:lstm} and there is no model-plant mismatch.
However, the LSTM model is obtained by means of a training (identification) procedure starting from input-output data samples collected on the real plant.
This raises the fundamental issues of reliability and robustness. 
The robust design of state-feedback MPC algorithms for nonlinear systems has been considered in a number of papers \cite{mayne2011tube,falugi2013getting,kohler2020computationally}, while the output feedback case has also been analysed for systems affected by a (state and/or input dependent) disturbance acting on the state equation \cite{kohler2019simple}.
However, to the best of the authors' knowledge, no results are available concerning the robust output-feedback MPC design for nonlinear black box estimated models, where also an estimation of the size of the disturbance must be computed.
For this reason, in this section we sketch a possible procedure towards the design of robust MPC for black-box models learned by LSTM networks. \\
Assume that the real system is described by model \eqref{eq:model:lstm} plus a disturbance $w$ acting on the output, representing the effect of the modeling error:
\begin{equation}\label{eq:robust:real_system}
\begin{aligned}
	\chi^+ &= f(\chi,u), \\
	y_m &= g(\chi,u)+w ,
\end{aligned}
\end{equation}

\noindent The function $f(\chi,u)$ is defined by \eqref{eq:model:lstm_x} and \eqref{eq:model:lstm_xi}, $g(\chi,u)$ is defined by \eqref{eq:model:lstm_output}, $y_m$ is the measured output variable. \\
It follows that the output estimation error is
\begin{equation}\label{eq:robust:w}
w=y_m-y
\end{equation}
where $y$ is the output of the model \eqref{eq:model:lstm} fed by the same input sequence $\bm{u} = \{u(0), ..., u(\tau)\}$. 
Note that if Theorem \ref{thm:iss} is fulfilled and if the plant can be represented as \eqref{eq:robust:real_system}, both the LSTM model and the plant are ISS, and thus $w$ is guaranteed to be bounded.
In the following we describe an algorithm to estimate from the data, with a probabilistic accuracy guarantee, the smallest ball $\mathcal{W}$, with radius $\rho_w^*$, containing $w$.
Such algorithm relies on the Scenario Approach \cite{bonassi2019lstm, campi2009scenario, hewing2019scenario}. 
To this end, let the initial state of the LSTM network $\chi_0$ be a random variable extracted from a set $\mathcal{X}_0$, with some probability measure $\mathbb{P}_{\chi}$. 
Moreover, consider a class $\bm{\mathcal{U}}_{\tau}$ of input sequences $\{u(0), \ldots, u(\tau)\}$, $\tau$ being the adopted time horizon, such that $u(k) \in \mathcal{U}$ for all $k=0, \ldots, \tau$.
Assume that $\bm{\mathcal{U}}_\tau$ is characterized by some probability measure $\mathbb{P}_u$.
Then, the radius $\rho_w^*$ is defined as the solution of
\begin{subequations} \label{eq:verification:inf}
	\begin{align}
	\rho_w^*=\min_{\rho_w} \quad & \rho_w, \\
	s.t. \quad &  \| \mathbf{w}(\chi_{0}, \mathbf{u}) \|\sss{\infty} \leq \rho_w  \qquad \forall{\chi_0}\in \mathcal{X}_0, \, \forall \mathbf{u} \in \bm{\mathcal{U}}\tau, \label{eq:verification:inf:constr}
	\end{align}
\end{subequations}
where $\mathbf{w}(\chi_{0}, \mathbf{u})$ is the disturbance sequence obtained feeding the trained LSTM \eqref{eq:model:lstm} with the input sequence $\mathbf{u}$ and initial state $\chi_{0}$. \\
Problem \eqref{eq:verification:inf} cannot be solved directly, due to infinite cardinality of constraint  \eqref{eq:verification:inf:constr}.
Nonetheless, owing to the convexity of \eqref{eq:verification:inf} with respect to the optimization variable $\rho_w$, the Scenario Approach can be exploited to recast the optimization problem as a finite-dimensional linear program, that allows to compute $\rho_w^*(\varepsilon,\beta)$ such that, with confidence $1 - \beta$,
\begin{equation}
\mathbb{P}_{\chi, u} \left\{  \| \mathbf{w}(\chi_{0}, \mathbf{u}) \|\sss{\infty} > \rho_w^*(\varepsilon,\beta) \right\} \leq \varepsilon.
\end{equation}
To do so, it is necessary to generate $K$ realizations of the uncertain variables $\chi_0$ and $\mathbf{u}$, denoted by $\chi_0^{(i)},\mathbf{u}^{(i)}$, drawn according to the respective probability density functions \cite{bonassi2019lstm}.
It has been shown \cite{campi2009scenario} that $\rho_w^*(\varepsilon,\beta)$ can be computed as
\begin{equation} \label{eq:verification:scenario}
\begin{aligned}
\rho_w^*(\varepsilon,\beta)=\min_{\rho_w} \quad & \rho_w, \\
s.t. \quad &  \| \mathbf{w}(\chi_{0}^{(i)}, \mathbf{u}^{(i)}) \|\sss{\infty} \leq \rho_w  \,\,\, \text{ for all } i=1, ..., K,
\end{aligned}
\end{equation}
provided that the number of scenarios satisfies the following inequality:
\begin{equation} \label{eq:verification:Nscen}
K \geq \frac{2}{\varepsilon} \left( \ln\frac{1}{\beta} + d \right).
\end{equation}
Once that the bound $\rho_w$ is known, it is possible to design a state observer for system \eqref{eq:robust:real_system}.
With mathematical development analogous to those of Theorem \ref{thm:obsv_convergence}, and under similar conditions, it can be proven that the state estimation error does not vanish, but asymptotically converges to an invariant set whose size depends on $\rho_w$ itself.
Based on this state estimate, robust state-feedback \cite{kohler2020computationally} or output-feedback \cite{kohler2019simple} control laws can be designed.

\section{Illustrative example} \label{sec:example}
The benchmark example here considered to test the described identification and control algorithm is a PH neutralization process \cite{hall1989modelling}, composed of two tanks, namely Tank 1 and Tank 2, see also Figure \ref{fig:Ph_scheme}.

\begin{figure}
	\centering
	\includegraphics[scale=0.3]{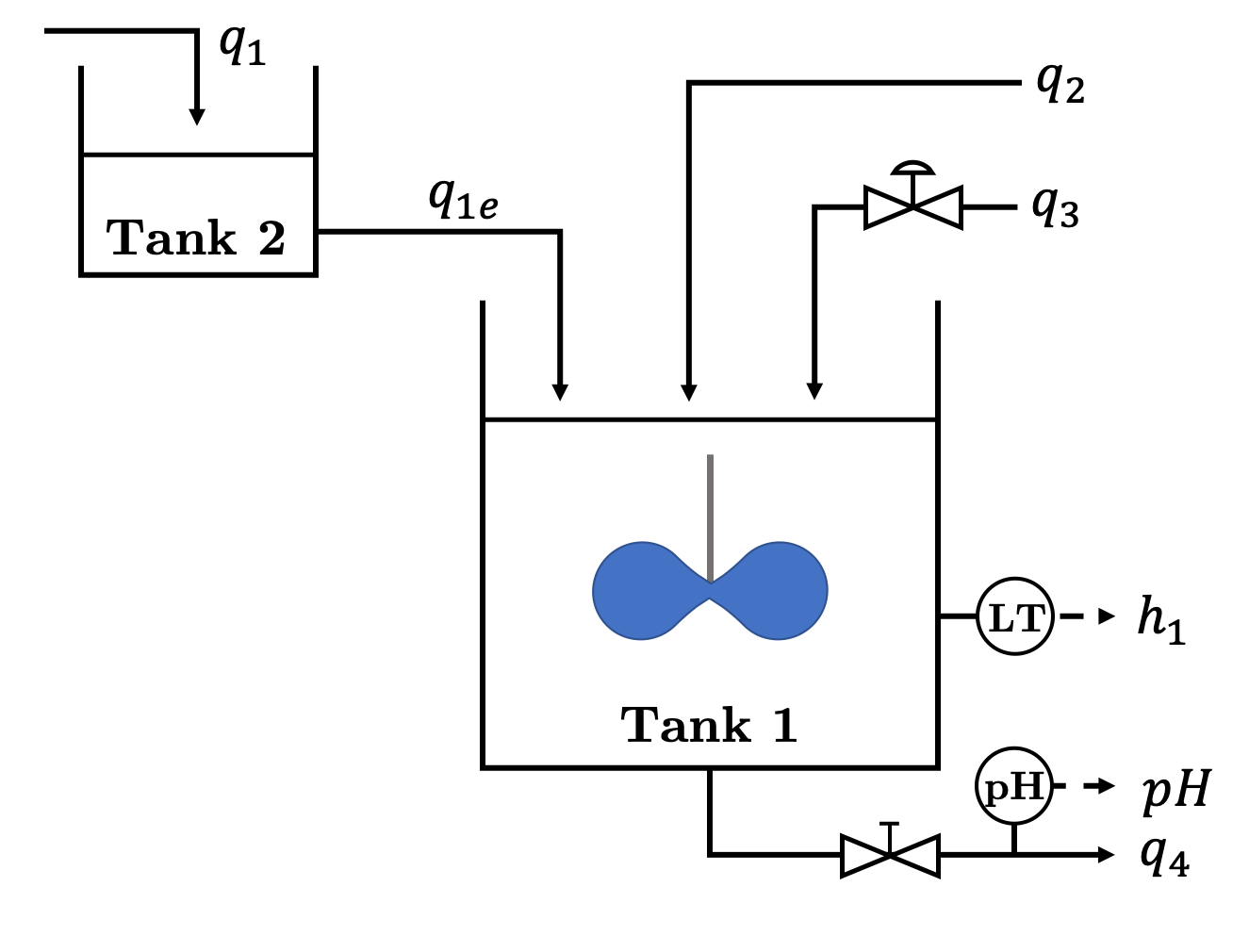}
	\caption{Schematic layout of the PH neutralization process}
	\label{fig:Ph_scheme}
\end{figure}

Tank 2 is fed by an acid stream $q_1$ and outputs a flow $q_{1e}$, but this hydraulic dynamics is neglected being much faster than the others involved, so that it can be assumed $q_{1}=q_{1e}$. Tank 1, also called reactor tank, is instead fed by three flows, namely $q_1$, a buffer flow $q_2$ and an alkaline flow $q_3$.
$q_1$ and $q_2$ are not manipulated variables, and represent disturbances, whereas a controlled valve modulates $q_3$. On the output side the flow is $q_4$, where the PH is measured. The objective of the control scheme is to stabilize the PH concentration to a desired value.
The plant is characterized by the following set of differential equations with a constraint \cite{hall1989modelling}:
\begin{equation}
\begin{aligned}
\dot{x}(t)= f_1(x(t))+f_2(x(t)) & u(t)+f_3(x(t))d(t), \\[1.5ex]
c(x(t),y(t)) & = 0,
\end{aligned}
\label{eq:processEq1}
\end{equation}
where
\begin{equation*}
	\begin{aligned}
	f_1(x(t)) & = \left[\frac{q_{1}}{A_{1}x_{3}}(W_{a1}-x_{1}),\frac{q_{1}}{A_{1}x_{3}}(W_{b1}-x_{2}),\frac{1}{A_{1}}(q_{1}-C_{v4}(x_{3}+z)^{n})\right]^{T}, \\
	f_2(x(t)) & = \left[\frac{1}{A_{1}x_{3}}(W_{a3}-x_{1}),\frac{1}{A_{1}x_{3}}(W_{b3}-x_{2}),\frac{1}{A_{1}}\right]^{T}, \\
	f_3(x(t)) & = \left[\frac{1}{A_{1}x_{3}}(W_{a2}-x_{1}),\frac{1}{A_{1}x_{3}}(W_{b2}-x_{2}),\frac{1}{A_{1}}\right]^{T}, \\
	c(x,y) & = x_{1}+10^{y-14}+10^{-y}+x_{2}\frac{1+2\cdot 10^{y-pK_{2}}}{1+10^{pK_{1}-y}+10^{y-pK_{2}}}. 
	\end{aligned}
\end{equation*}

\vspace{2mm}
\noindent The parameters $pK_{1}$ and $pK_{2}$ are the first and second dissociation constants of the weak acid $H_{2}CO_{3}$.
The nominal values of the model parameters are given in Table \ref{tab:Tab3}, where $[M]=[\frac{mol}{L}]$. Overall, the simplified model considered is of order three, with one input and one output.

\footnotesize
\begin{table}[h]
	\centering	
	\caption{Nominal operating conditions of the $pH$ system}
	\begin{tabular}{lll}
		\toprule
		$z = 11.5\,cm$ 			& $W_{a1} = 3.00\cdot10^{-3}M$ &$q_{1} = 16.6\,mL/s$  \\
		$C_{v4} = 4.59$ 	& $W_{b1} = 0.00M$ &$q_{2} = 0.55\,mL/s$  \\
		$n = 0.607$ 			& $W_{a2} = -0.03M$ &	$q_{3} = 15.6\,mL/s$ \\
		$pK_{1} = 6.35$ 		& $W_{b2} = 0.03M$ &$q_{4} = 32.8\,mL/s$ \\
		$pK_{2} = 10.25$ 		& $W_{a3} = 3.05\cdot10^{-3}M$ &$A_{1} = 207\,cm^{2}$  \\
		$h_{1} = 14\,cm$		& $W_{b3} = 5.00\cdot10^{-5}M$ & $W_{a4} = -4.32\cdot10^{-4}M$\\
		$pH = 7.0$& $W_{b4} = 5.28\cdot10^{-4}M$\\
		\bottomrule
	\end{tabular}
	\label{tab:Tab3}
\end{table}
\normalsize

\subsection{Identification}
The simulator of the plant has been forced with a multilevel pseudo-random signal (MPRS), so as to properly excite the system, and the input-output data have been recorded with a sampling time $T_s=10 s$ so as to collect about 30-40 samples in a step response.
Also, to test the algorithm in a more realistic scenario, a white noise was added both to the input and output variables, with power $4 \times 10^{-4} W$.
\rev{The generated dataset consists of $15$ experiments for the training set, $4$ for the validation set, and $1$ for the test set, where each experiment is a collection of $N_s = 2000$  $\{ u(t), y(t)\}$ samples.The datasets have been normalized according to standard techniques \cite{goodfellow2016deep}, so that $u \in [-1, 1]$ and $y_m \in [-1, 1]$. }

\begin{figure}
	\centering
	\includegraphics[scale=0.7]{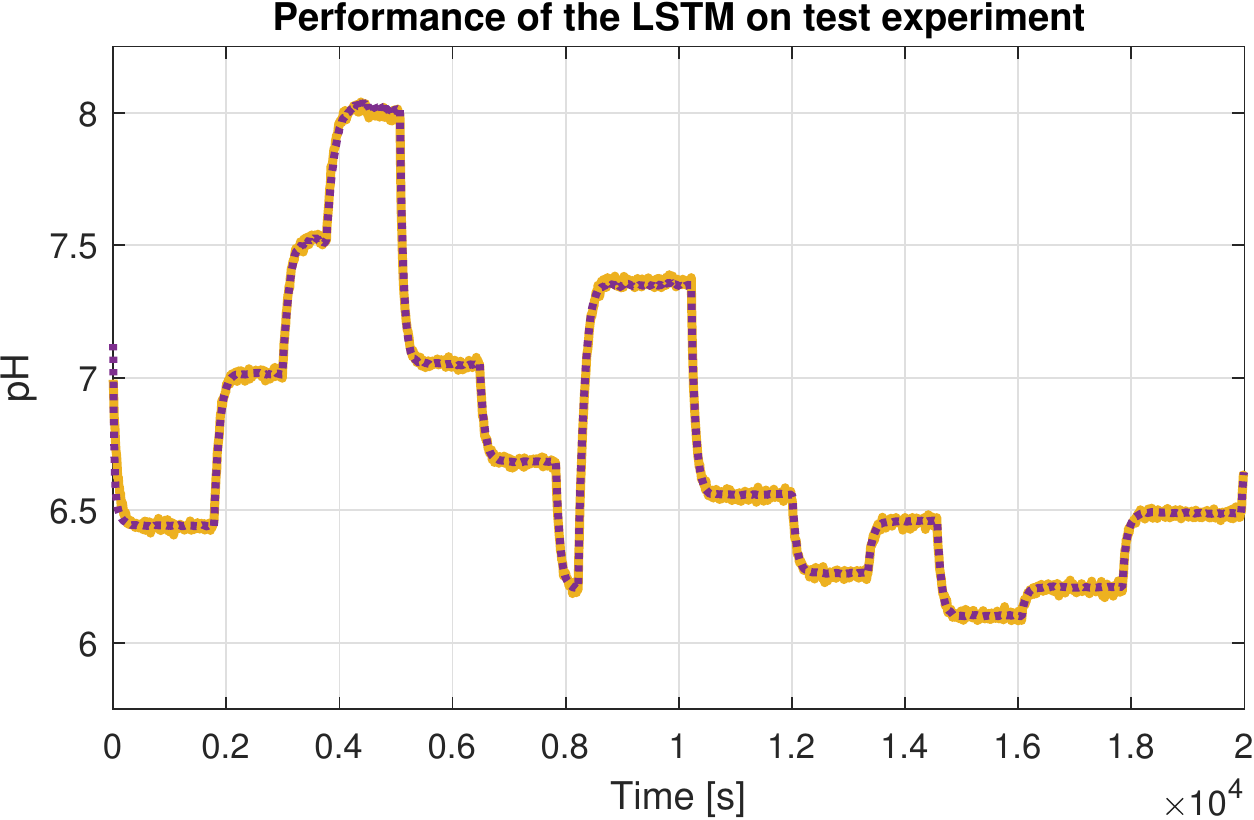}
	\caption{Performances of the trained LSTM on the independent test dataset: LSTM prediction (purple dotted line) compared to the real measured output (yellow line).}
	\label{fig:validation}
\end{figure}

\rev{The LSTM network described by \eqref{eq:model:lstm} with $n_x=7$ neurons has been implemented and trained in Python 3.7 using Tensorflow 1.15. 
The training procedure has been carried out with the RMSProp algorithm \cite{goodfellow2016deep}, using randomly picked single experiments as mini-batches, to minimize the following loss function
\begin{subequations}
\begin{align}
	L =& \frac{1}{N_s} \sum_{k=0}^{N_s-1} \| y_m(k) - y(k) \|_2^2 + \label{eq:example:loss:mse} \\
	& -\rho_1^{+} \max (r_1, 0) - \rho_1^{-} \min (r_1, 0) -\rho_2^{+} \max (r_2, 0) - \rho_2^{-} \min (r_2, 0), \label{eq:example:loss:residuals}
\end{align} \label{eq:example:loss_tot}
\end{subequations}
where \eqref{eq:example:loss:mse} is the MSE between $y_m$, i.e. the experiment's measured output, and $y$, i.e. the output predicted by the LSTM network.
Furthermore, as hinted in Remark \ref{remark:train}, in order to obtain a network enjoying the $\delta$ISS property -- and, in light of Corollary \ref{prop:deltaiss_implies_iss}, the ISS as well -- the residuals of the inequalities \eqref{eq:deltaiss:condition}, have been placed in the loss function, see \eqref{eq:example:loss:residuals}. These residuals, denoted by $r_1$ and $r_2$, are defined as 
\begin{subequations}
	\begin{align}
		r_1 &= -1+ \bar{\sigma}_g^f +\alpha\bar{\sigma}_g^o + \frac{1}{4}\bar{\sigma}_c^x\|U_o\| - \frac{1}{4}\bar{\sigma}_g^f\bar{\sigma}_c^x\|U_o\|, \\
		r_2 &= \frac{1}{4}\bar{\sigma}_g^f\bar{\sigma}_c^x\|U_o\| - 1.
	\end{align}
\end{subequations}
Note that in \eqref{eq:example:loss:residuals} a piece-wise linear reward has been adopted so as to avoid an uselessly large fulfillment of \eqref{eq:deltaiss:condition} at expenses of the fitting quality. 
Indeed, the weights $\rho_{1,2}^\pm$ have been chosen sufficiently small to ensure that the MSE dominates the loss function ($\rho_1^+ = \rho_2^+  = 4 \cdot 10^{-3}$, $\rho_1^- = \rho_2^- = 2 \cdot 10^{-5}$). An early-stopping rule has also been implemented to stop the training procedure when for a pre-defined number of epochs the MSE over the validation set does not improve.}

\rev{The trained network, as well as the datasets, is publicly available \cite{enrico_terzi_2020_3956067}.} The modeling performances over an independent test set are reported in Figure \ref{fig:validation}, where the measured output is compared to the prediction of the trained network, initialized from a random value, and forced by the same input $u$.
A quantitative performance index is the FIT $[\%]$ value, which is computed as
\begin{equation} \label{eq:example:fit}
\text{FIT}=100\left(1- \frac{\|\bm{y}_m - \bm{y}\|}{\|\bm{y}_m - y_{avg}\|} \right)
\end{equation}
where $\bm{y}_m$ collects the output samples of the dataset, $y_{avg}$ is its average and $\bm{y}$ collects the output simulation of the trained LSTM.
Over the independent test dataset, the FIT scores $96.5\%$, mainly due to noise, thus confirming remarkable modeling properties.
\rev{The designed cost function \eqref{eq:example:loss_tot} ensured the satisfaction of the $\delta$ISS condition \eqref{eq:deltaiss:condition}, with residuals $r_1=-0.049$ and $r_2= -0.997$.
Indeed, the parameters of the trained LSTM lead to $\bar{\sigma}_g^f = 0.82$, $\bar{\sigma}_g^i = 0.51$, $\bar{\sigma}_c^c = 0.93$, $\bar{\sigma}_g^o = 0.61$, $\bar{\sigma}_c^x = 0.99$, $\|  U_f \| =  0.01$, $\|  U_i \| =  0.01$, $\|  U_c \| =  0.4$, $\|  U_o \| =  0.01$, therefore $A_\delta = \begin{bmatrix}
0.819  &  0.215 \\
0.502  &  0.135
\end{bmatrix}$ and $\rho(A_\delta) = 0.95$, i.e. $A_\delta$ is Schur stable.}

\subsection{Control}
The designed observer follows \eqref{eq:obsv:model} and is tuned according to \eqref{eq:obsv:tuning}, thus guaranteeing a realiable state estimate to the MPC controller.
The testing experiment is a reference tracking one. More specifically, the controller is started at time $500 s$, and it is required to track a setpoint reference $\bar{y} \in \{7, 8, 6.5, 6, 7.5\}$, and to stabilize the associated equilibrium $(\bar{u}, \bar{\chi}, \bar{y})$ \rev{of the LSTM model. Therefore, $\bar{u}$ and $\bar{\chi}$ have been numerically computed from \eqref{eq:model:lstm} by setting $u(k) = \bar{u}$, $\chi(k+1) = \chi(k) = \bar{\chi}$ and $y(k) = \bar{y}$}. 

\rev{The adopted prediction horizon is $N = 10$ steps}, and matrices in the cost function of the controller are $Q=I_{2n_x}$, $R=5$, while the terminal weight matrix $P$ is computed according to \eqref{eq:mpc:P_matrix}.

The closed-loop trajectory is reported in Figure \ref{fig:closed_loop}, which shows that the controller is able to effectively manage the plant, fulfilling control constraints and improving the transient responses. In particular, note that around 2000 $s$ the input is saturated to its upper bound. 
To confirm the validity of the estimate provided by the observer, in Figure \ref{fig:output_estimate}, the output estimate $\hat{y}$ and the real plant output are compared, showing the convergence of the estimate, save for a static mismatch due to the model (LSTM) - plant (pH simulator) gain mismatch when $pH \simeq 8$.

\begin{figure}
	\centering
	\subfloat[]{\includegraphics[width=0.475 \linewidth]{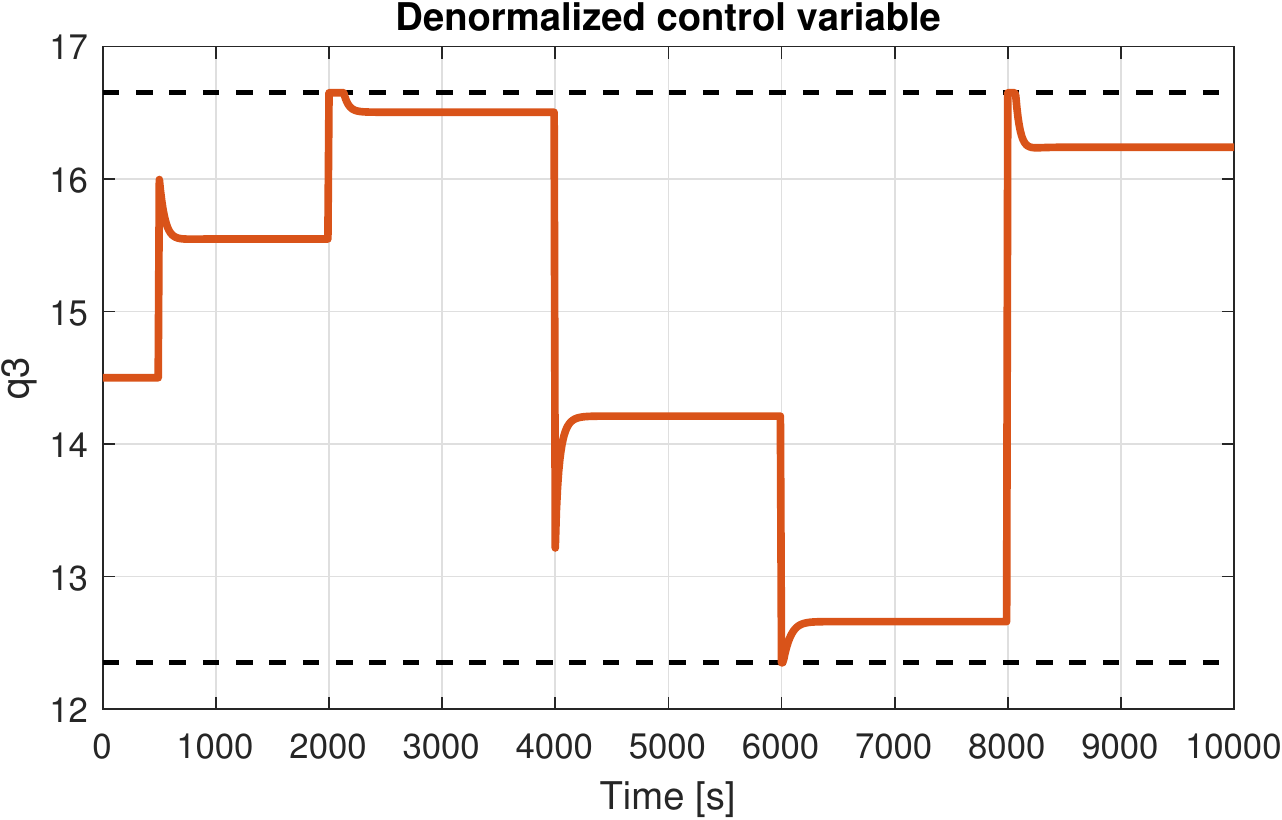}} \quad
	\subfloat[]{\includegraphics[width=0.475 \linewidth]{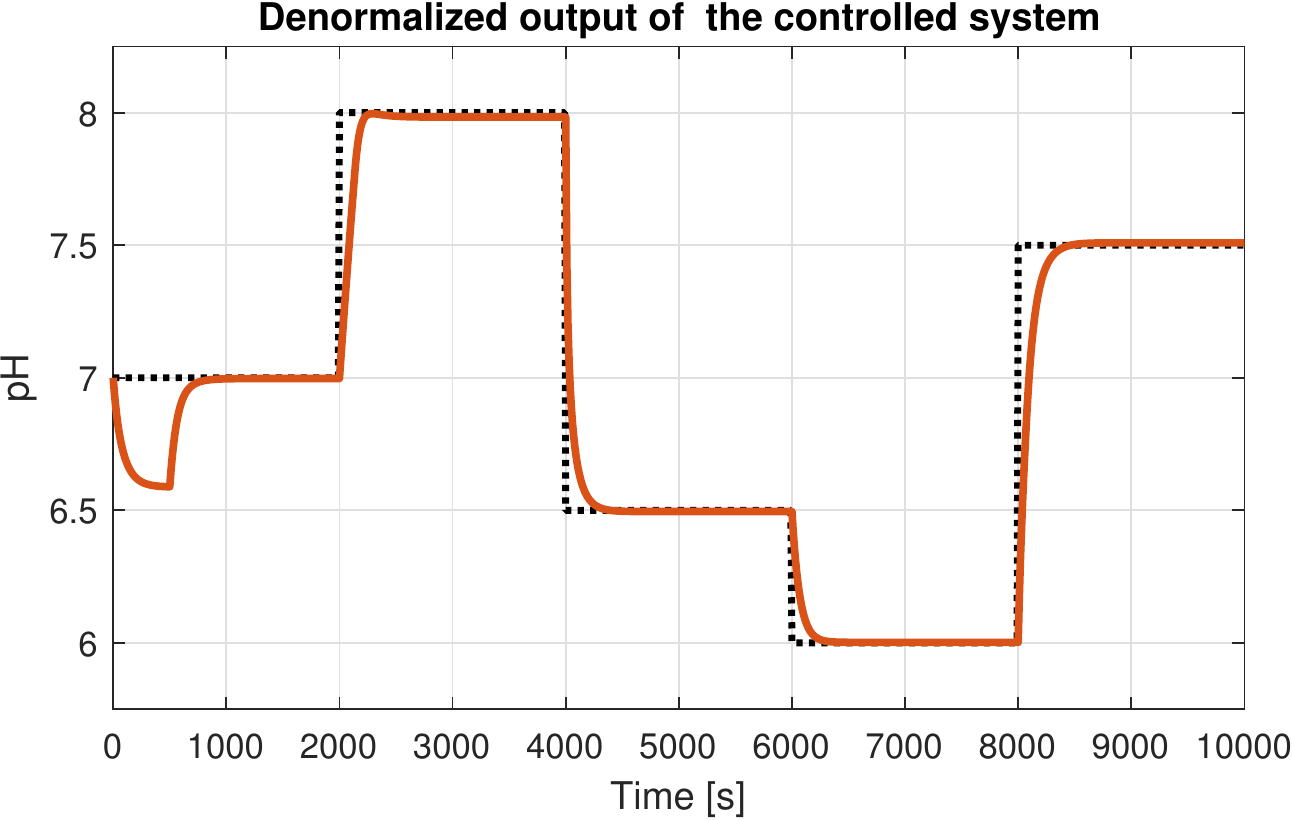}}
	\caption{Closed loop trajectories. (a) Denormalized input trajectory (red line) compared to the lower and upper bounds (dashed lines); (b) Denormalized closed-loop output trajectory (red line), compared to the output reference (dotted line).}
	\label{fig:closed_loop}
\end{figure}
\begin{figure}
	\centering
	\includegraphics[width=0.475 \linewidth]{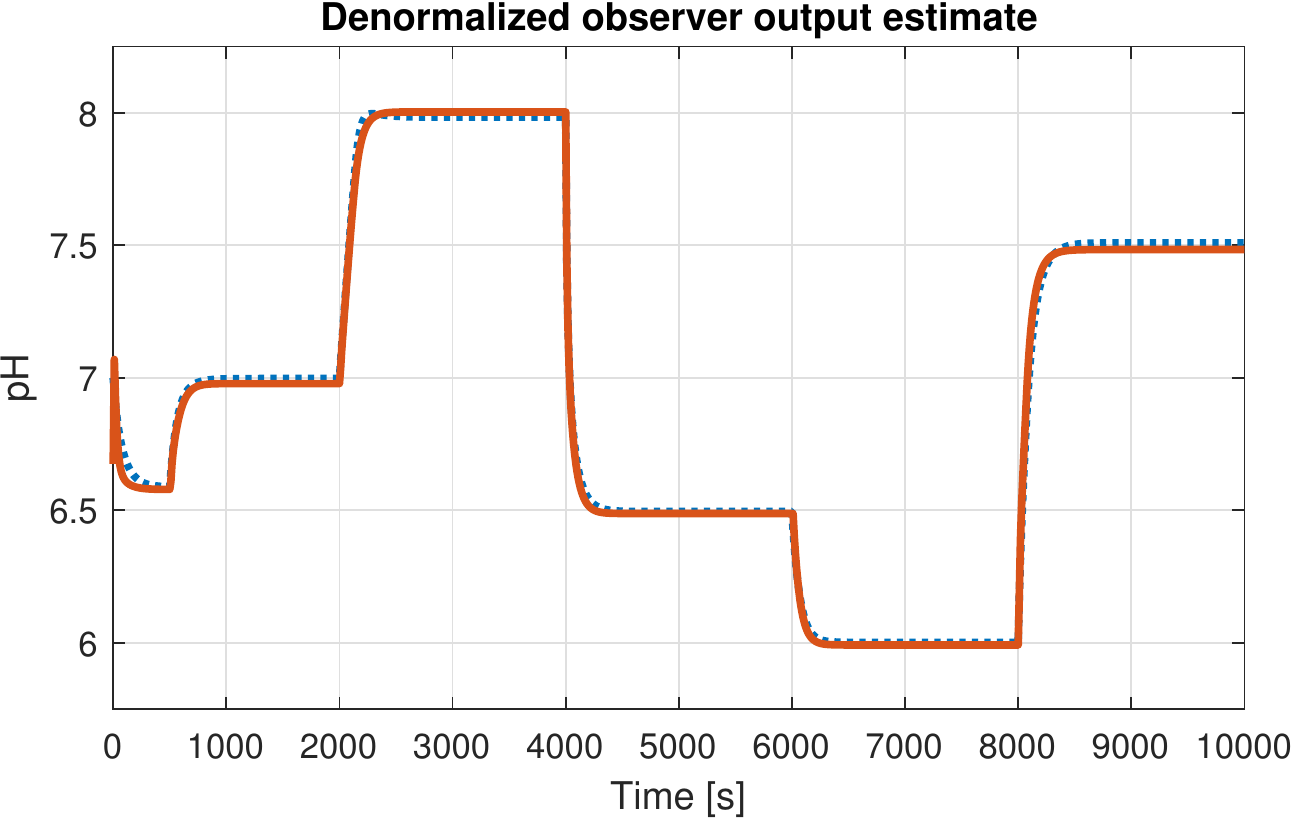}
	\caption{Denormalized output estimated by the state observer \eqref{eq:obsv:model} (red continuous line) compared to the real plant output (blue dotted line).}
	\label{fig:output_estimate}
\end{figure}

\section{Conclusion} \label{sec:conclusions}
In this paper Long Short Term Memory networks have been investigated from a system theoretical perspective, and sufficient conditions for their ISS and $\delta$ISS stability properties have been provided in terms of their internal weights. A novel formulation of the optimization problem to train the NN, including constraints, has been employed. The obtained NN has been then used as a prediction model in a MPC scheme endowed with an observer to suitably provide the initial state estimate, with guaranteed convergence of the estimate and asymptotic stability of the closed-loop equilibrium. Numerical results on a nonlinear SISO benchmark confirm the theoretical findings in the case of a tracking problem.


\rev{Future work will be devoted to enhancing the robustness of the control algorithm with respect to model-plant mismatch. 
To this regard, the preliminary analysis developed in Section \ref{sec:scenario} for the estimation of a bound on the mismatch can be extended, and other approaches can be studied, like the one presented by Fazlyab et al. \cite{fazlyab2019efficient}, where however only feed-forward networks have been considered.}

\section{Appendix}
The following properties will be used in the proofs:

\begin{property} \label{prop:matrix_product_Hadamard}
Given vectors $v_1,v_2 \in \mathbb{R}^n$, $v_1 \circ v_2 = diag(v_1)v_2$.
\end{property}

\begin{property} \label{prop:A_spectral_radius}
Given a diagonal matrix $A$, $\|A\|=\rho(A)$, and the eigenvalues of $A$ are its diagonal entries.
\end{property}

\begin{property} \label{prop:norms}
	Given two vectors $a,b\in\mathbb{R}^{n}$ and a positive definite matrix $M \succ 0$, it holds that, for any $\nu \neq 0$, $$\|a+b\|_M^{2}\leq(1+\nu^{2})\|a\|_M^{2}+(1+\frac{1}{\nu^{2}})\|b\|_M^{2}.$$
\end{property}

First, let us introduce an instrumental Lemma which will be required in the following theoretical contribution.
\begin{lemma} \label{lemma:schur}
	Given a $2 \times 2$ real matrix $A$, it is Schur stable if and only if
	\begin{equation}
	-1-a < b < 1,
	\end{equation}
	where, being $A_{ij}$ the element of $A$ in position $(i,j)$, $a = - A_{11} - A_{22}$ and $b = {A_{11} A_{22} - A_{12} A_{21}}$.
\end{lemma}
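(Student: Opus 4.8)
The plan is to reduce Schur stability of $A$ to a sign test on the two coefficients of its characteristic polynomial. Writing
\begin{equation}
p(\lambda)=\det(\lambda I - A)=\lambda^2-(A_{11}+A_{22})\lambda+(A_{11}A_{22}-A_{12}A_{21})=\lambda^2+a\lambda+b,
\end{equation}
so that $a$ is minus the trace and $b$ is the determinant, Schur stability of $A$ is by definition equivalent to both roots of $p$ lying in the open unit disk $\{|\lambda|<1\}$. The cleanest route I would take to the stated inequality is the M\"obius (Cayley) transform $\lambda=\tfrac{1+s}{1-s}$, a bijection of the open unit disk onto the open left half-plane $\{\operatorname{Re} s<0\}$; consequently $p$ is Schur if and only if its image under this transform is Hurwitz.

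First I would substitute $\lambda=\tfrac{1+s}{1-s}$ into $p(\lambda)=0$ and clear denominators by multiplying through by $(1-s)^2$, which yields the real quadratic
\begin{equation}
\tilde p(s)=(1-a+b)\,s^2+2(1-b)\,s+(1+a+b).
\end{equation}
Since the transform carries the unit disk exactly onto the open left half-plane, the roots of $p$ all satisfy $|\lambda|<1$ if and only if those of $\tilde p$ all satisfy $\operatorname{Re} s<0$. For a real second-degree polynomial the Routh--Hurwitz criterion says this holds precisely when its three coefficients share a common strict sign; normalising to the positive case gives the three inequalities $1+a+b>0$, $1-b>0$, and $1-a+b>0$. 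A small point I would flag here is the boundary bookkeeping: the strict requirement $1-a+b>0$ (equivalently the leading coefficient of $\tilde p$ being nonzero and positive) is exactly what rules out $p(-1)=0$, i.e. the root $\lambda=-1$ that the transform sends to $s=\infty$, so no root is silently lost or created on the unit circle.

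Finally I would rewrite the three conditions as $b<1$, $b>-1-a$, and $b>a-1$, so that jointly they read $\max(a-1,\,-1-a)<b<1$. The only delicate point, and the step I expect to carry the real content of the lemma, is collapsing the two lower bounds into the single bound $-1-a$ appearing in the statement: since $a=-(A_{11}+A_{22})$, one has $-1-a\ge a-1$ exactly when $a\le 0$, i.e. when the trace $A_{11}+A_{22}$ is nonnegative, and in that case $b>-1-a$ already forces both $b>a-1$ and $b>-1$, rendering the third Routh--Hurwitz inequality redundant. Under this sign condition on the trace the three inequalities are therefore equivalent to $-1-a<b<1$, which proves both directions of the claimed characterisation; absent it, the correct lower bound is the larger quantity $\max(a-1,-1-a)$, so it is precisely the nonnegativity of the trace (met by every bound matrix $A$, $A_\delta$, $\hat A$ to which the lemma is later applied) that makes the simplified form exact.
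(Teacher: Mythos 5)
Your proof is correct and lands on the right characterization, but by a genuinely different route from the paper's. The paper applies Jury's criterion directly to $p(\lambda)=\lambda^2+a\lambda+b$, builds the Jury table, and imposes positivity of its first column; you instead map the open unit disk to the open left half-plane via the Cayley transform $\lambda=(1+s)/(1-s)$ and apply Routh--Hurwitz to $\tilde p(s)=(1-a+b)s^2+2(1-b)s+(1+a+b)$, with correct boundary bookkeeping for the root $\lambda=-1$ that the transform sends to $s=\infty$. Both routes yield the same complete set of conditions, $b<1$, $b>-1-a$, $b>a-1$ (equivalently $|a|<1+b$ and $b<1$), and both must then invoke the sign condition $a\le 0$ to collapse the two lower bounds into the single bound $-1-a$ appearing in the statement. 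On this last point your write-up is actually more careful than the paper's: the paper's proof inserts the phrase ``recalling that $a<0$'' with no justification, even though the lemma is stated for an arbitrary real $2\times 2$ matrix and nothing in its hypotheses forces $a<0$; indeed, as stated the lemma is false in general (take $A=\mathrm{diag}(-1.5,\,0.5)$, so $a=1$, $b=-0.75$, which satisfies $-1-a<b<1$ yet has an eigenvalue outside the unit disk). You identify precisely that the simplification is exact if and only if the trace is nonnegative, and you observe that this holds for every matrix to which the lemma is later applied ($A$, $A_\delta$, $\hat{A}$ all have nonnegative entries, hence nonnegative trace). So your argument both proves the result in the regime where it is used and flags the implicit hypothesis that the paper's statement and proof leave unstated; the paper's Jury-table route is marginally more self-contained (no transform needed), while yours makes the hidden trace condition and the $\lambda=-1$ degeneracy explicit.
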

\begin{proof}[Proof of Lemma \ref{lemma:schur}]
To characterize the stability of a $2 \times 2$ matrix $A$, let us compute its characteristic equation
\begin{align}
p(\lambda)=\text{det}(\lambda I_2-A)=\lambda^2 + a\lambda +b=0
\end{align}
where $a= - A_{11} - A_{22}$ and $b=A_{11} A_{22} - A_{12} A_{21}$.
We rely on Jury's criterion \cite{jury1962simplified}, providing a necessary and sufficient condition, to enforce stability of A. The Jury table of $p(\lambda)$ is
\begin{equation}
\begin{matrix}
1 & a & b\\
1-b^2 & a(1-b)& \\
\frac{(1-b^2)^2-a^2(1-b)^2}{(1-b^2)} & &
\end{matrix}
\end{equation}
Jury's criterion requires to force the first column to have all positive entries. This leads, with standard arguments and recalling that $a<0$, to the set of conditions:
\begin{equation}
\begin{cases}
b^2<1\\
1-b^2 > -a(1-b)
\end{cases}
\end{equation}
which can be further synthesized in
\begin{equation}
-1-a<b<1
\label{eq:a_b_condition}
\end{equation}
\end{proof}

\rev{\begin{proof}[Proof of Theorem \ref{thm:iss}]
	Let us consider the first LSTM state equation, i.e. \eqref{eq:model:lstm_x}. Taking the norm we get the following inequality
	\begin{equation*}
	\begin{aligned}
		\| x^+ \| &\leq \left\| \sigma_{g}\left( W_f u + U_f \xi + b_f \right) \right\| \| x \| + \left\| \sigma_{g}\left( W_i u + U_i \xi + b_i \right) \right\| \left\| \sigma_{c}\left( W_c u + U_c \xi + b_c \right) \right\|.
	\end{aligned}
	\end{equation*}
	In view of Properties \ref{prop:matrix_product_Hadamard} and \ref{prop:A_spectral_radius}, and owing to the Lipschitzianity of $\sigma_{c}$ and $\sigma_{g}$, recalling \eqref{eq:model:bound:sigma_g_f}-\eqref{eq:model:bound:sigma_c_x}, it holds that
	\begin{equation} \label{eq:proof:iss:xplus}
	\begin{aligned}
	\| x^+ \| &\leq \bar{\sigma}_g^f \| x \| + \bar{\sigma}_g^i \left[ \| W_c \| \| u \| + \| U_c \| \| \xi \| + \| b_c \| \right].
	\end{aligned}
	\end{equation}
	Then, with similar arguments, 
	\begin{equation} \label{eq:proof:iss:xiplus}
	\begin{aligned}
		\| \xi^+ \| &\leq \left\| \sigma_{g}\left( W_o u + U_o \xi + b_o \right) \right\| \| \sigma_{c}(x^+) \| \leq \bar{\sigma}_g^o \| x^+ \| \leq \bar{\sigma}_g^o \bar{\sigma}_g^f \| x \| + \bar{\sigma}_g^o \bar{\sigma}_g^i \| U_c \| \| \xi \| + \bar{\sigma}_g^o \bar{\sigma}_g^i \left[ \| W_c \| \| u \| + \| b_c \|\right].
	\end{aligned}
	\end{equation}
	Grouping  \eqref{eq:proof:iss:xplus} and \eqref{eq:proof:iss:xplus} we get
	\begin{equation} \label{eq:proof:iss:sys}
		\begin{bmatrix}
			\| x^+ \| \\
			\| \xi^+ \|
		\end{bmatrix} \leq A \begin{bmatrix}
			\| x \| \\
			\| \xi \|
		\end{bmatrix} + B_u \| u \| + B_b \| b_c \|,
	\end{equation}
	where $A = \begin{bmatrix}
			\bar{\sigma}_g^f & \bar{\sigma}_g^i \| U_c \| \\
			\bar{\sigma}_g^o \bar{\sigma}_g^f  & \bar{\sigma}_g^o \bar{\sigma}_g^i \| U_c \|
	\end{bmatrix}$ as in \eqref{eq:iss:A}, 
	$ B_u = \begin{bmatrix}
		\bar{\sigma}_g^i \| W_c \|  \\
		\bar{\sigma}_g^o \bar{\sigma}_g^i \| W_c \| 
	\end{bmatrix}$, and
	$ B_b = \begin{bmatrix}
		\bar{\sigma}_g^i \\
		\bar{\sigma}_g^o \bar{\sigma}_g^i
	\end{bmatrix}$.\\[2mm]
	Now we show that the stability of matrix $A$, i.e. $\rho(A) < 1$, entails the ISS property of the LSTM. By iterating \eqref{eq:proof:iss:sys} we get
	\begin{equation} \label{eq:proof:iss:int3}
			\begin{bmatrix}	\| x \| \\ \| \xi \| \end{bmatrix} 
			\leq A^k \begin{bmatrix}	\| x_0 \| \\ \| \xi_0 \| \end{bmatrix} + \sum_{i=0}^{k-1} A^{k-i-1}  \left( B_u \| u(i) \| + B_b \| b_c \| \right).
	\end{equation}
	Noting that 
	\begin{equation*}
	\left\| \begin{bmatrix}	\| x \| \\ \| \xi \| \end{bmatrix} \right\|
	= \left\| \begin{bmatrix} x  \\ \xi \end{bmatrix} \right\| = \| \chi \|,
	\end{equation*}
	taking the norm of \eqref{eq:proof:iss:int3} it follows that
	\begin{equation}
		\| \chi \| \leq \| A^k \| \| \chi_0 \| + \left\| \sum_{i=0}^{k-1} A^{k-i-1} \left( B_u \| u \| + B_b \| b_c \|\right) \right\|
	\end{equation}
	With standard norm arguments, since $A$ is Schur stable, there exist constants $\mu\geq 1$ and $\lambda\in(0,1)$ such that
	\begin{equation}
	\|\chi(k)\| \leq \mu\lambda ^k \|\chi_{0} \| + \|(I_{2}-A)^{-1}B_u\| \max_{h\geq 0}\|{u}(h)\| +  \|(I_{2}-A)^{-1}B_b \| \|b_c\| 
	\end{equation}
	i.e., that \eqref{eq:model:lstm} is ISS according to Definition~\ref{def:ISS}.
\end{proof}
\begin{proof}[Proof of Proposition \ref{prop:iss:schur}]
	Applying Lemma \ref{lemma:schur} to \eqref{eq:iss:A}, being  $a = -\bar{\sigma}_g^f -\bar{\sigma}_g^o \bar{\sigma}_g^i \| U_c \|$ and $b=0$, we conclude that a necessary and sufficient condition for the Schur stability of $A$ is that
	\begin{equation}
		-1 +  \bar{\sigma}_g^f +\bar{\sigma}_g^o \bar{\sigma}_g^i \| U_c \| <0 < 1,
	\end{equation}
	from which \eqref{eq:iss:condition} can be easily derived.
\end{proof}} 

\begin{proof}[Proof of Theorem \ref{thm:deltaISS}]
Let us compute the evolution of the upper bound of the norms of the two components of the state $\chi$, namely $x$ and $\xi$.
We proceed by addressing the two subvectors separately.
\begin{equation}
\begin{aligned}
	x_1^+-x_2^+ =& 	\, \sigma_g(W_f u_1 + U_f \xi_1 + b_f) \circ x_1 + \sigma_g(W_i u_1 + U_i \xi_1 + b_i) \circ \sigma_c(W_c u_1 + U_c \xi_1 + b_c)\\
		 	& -[\sigma_g(W_f u_2 + U_f \xi_2 + b_f) \circ x_2+\sigma_g(W_i u_2 + U_i \xi_2 + b_i) \circ \sigma_c(W_c u_2 + U_c \xi_2 + b_c)]\\
	=& \, \sigma_g(W_f u_1 + U_f \xi_1 + b_f)\circ (x_1-x_2)+x_2 \circ [\sigma_g(W_f u_1 + U_f \xi_1 + b_f)-\sigma_g(W_f u_2 + U_f \xi_2 + b_f)] \\
			&+\sigma_g(W_i u_1 + U_i \xi_1 + b_i) \circ [\sigma_c(W_c u_1 + U_c \xi_1 + b_c)-\sigma_c(W_c u_2 + U_c \xi_2 + b_c)] \\
			&+\sigma_c(W_c u_2 + U_c \xi_2 + b_c)\circ [\sigma_g(W_i u_1 + U_i \xi_1 + b_i) -\sigma_g(W_i u_2 + U_i \xi_2 + b_i)]
\end{aligned}
\end{equation}
Recalling the upper bounds \eqref{eq:model:bound:sigma_g_f}-\eqref{eq:model:bound:sigma_c_c}, Lipschitzianity of $\sigma_c(\cdot)$ and $\sigma_g(\cdot)$ and taking the norms both sides, we write, in view of Properties \ref{prop:matrix_product_Hadamard} and \ref{prop:A_spectral_radius}, it follows that
\begin{equation}\label{eq:proof:deltaiss:x1_x2}
\begin{aligned}
	\|x_1^+-x_2^+\| \leq & \, \bar{\sigma}_g^f \|x_1-x_2\| + \frac{\bar{\sigma}_g^i\bar{\sigma}_c^c}{1-\bar{\sigma}_g^f} \frac{1}{4} \bigg[ \|W_f\| \|u_1-u_2\| + 
			\|U_f\| \|\xi_1-\xi_2\| \bigg] + \\
			& + \bar{\sigma}_g^i\bigg(\|W_c\| \|u_1-u_2\| + \|U_c\| \|\xi_1-\xi_2\| \bigg)+ \bar{\sigma}_c^c\frac{1}{4}\bigg(\|W_i\| \|u_1-u_2\| + \|U_i\| \|\xi_1-\xi_2\| \bigg) \\
	\leq 	& \, \bar{\sigma}_g^f \|x_1-x_2\| + \alpha \|\xi_1-\xi_2\| + \beta\|u_1-u_2\|,
	\end{aligned}
\end{equation}
where $ \alpha= \bigg[\frac{1}{4}\|U_f\|   \frac{\bar{\sigma}_g^i\bar{\sigma}_c^c}{1-\bar{\sigma}_g^f}+   \bar{\sigma}_g^i   \|U_c\|   + \frac{1}{4} \|U_i\|    \bar{\sigma}_c^c       \bigg]$ and $\beta=\bigg[\frac{1}{4}\|W_f\|   \frac{\bar{\sigma}_g^i\bar{\sigma}_c^c}{1-\bar{\sigma}_g^f}+   \bar{\sigma}_g^i   \|W_c\|   + \frac{1}{4} \|W_i\|    \bar{\sigma}_c^c\bigg]$. \\[2mm]
Concerning the second state sub-vector,
\begin{equation}
	\begin{aligned}
	\xi_1^+-\xi_2^+ =& \, \sigma_{g}(W_o u_1 + U_o \xi_1 + b_o) \circ \sigma_c(x_1^+) - \sigma_{g}(W_o u_2 + U_o \xi_2 + b_o)\circ \sigma_c(x_2^+) \\
	=& \, \sigma_{g}(W_o u_1 + U_o \xi_1 + b_o)\circ ( \sigma_c(x_1^+)-\sigma_c(x_2^+) ) +\sigma_c(x_2^+)\circ\bigg[\sigma_{g}(W_o u_1 + U_o \xi_1 + b_o) - \sigma_{g}(W_o u_2 + U_o \xi_2 + b_o)\bigg].
	\end{aligned}
\end{equation}
By recalling the bounds \eqref{eq:model:bound:sigma_g_f}-\eqref{eq:model:bound:sigma_c_c}, \eqref{eq:model:bound:sigma_c_x} and \eqref{eq:proof:deltaiss:x1_x2}, taking the norm both sides, we get
\begin{equation}\label{eq:proof:deltaiss:xi1_xi2}
\begin{aligned}
\|\xi_1^+-\xi_2^+\|  \leq & \, \bar{\sigma}_g^o \|x_1^+-x_2^+\| + \bar{\sigma}_c^x\frac{1}{4}\bigg[\|W_o\|\|u_1-u_2\| + \|U_o\| \|\xi_1-\xi_2\|\bigg]\\
\leq &\, \bar{\sigma}_g^o \bigg[\bar{\sigma}_g^f \|x_1-x_2\| + \alpha \|\xi_1-\xi_2\| + \beta\|u_1-u_2\|\bigg] + \bar{\sigma}_c^x\frac{1}{4}\bigg[\|W_o\|\|u_1-u_2\| + \|U_o\| \|\xi_1-\xi_2\|\bigg]\\
\leq & \, \bar{\sigma}_g^o\bar{\sigma}_g^f \|x_1-x_2\| + \bigg[\alpha \bar{\sigma}_g^o + \frac{1}{4}\bar{\sigma}_c^x\|U_o\| \bigg] \|\xi_1-\xi_2\|+ 
	 \bigg[\beta \bar{\sigma}_g^o + \frac{1}{4}\bar{\sigma}_c^x \|W_o\| \bigg] \|u_1-u_2\| 
\end{aligned}
\end{equation}
Grouping inequalities \eqref{eq:proof:deltaiss:x1_x2} and \eqref{eq:proof:deltaiss:xi1_xi2}, we obtain that
\begin{equation}
\begin{bmatrix}
\|x_1^+-x_2^+\|\\
\|\xi_1^+-\xi_2^+\|\\
\end{bmatrix} \leq A_\delta \begin{bmatrix}
\|x_1-x_2\|\\
\|\xi_1-\xi_2\|\\
\end{bmatrix} + B_\delta \|u_1-u_2\|
\label{eq:aux_system}
\end{equation}
where $A_\delta$ is defined in \eqref{eq:deltaiss:A} and
$B_\delta=\begin{bmatrix}
\beta\\
\beta \bar{\sigma}_g^o + \frac{1}{4}\bar{\sigma}_c^x\|W_o\| \\
\end{bmatrix}$.\\
%
%
%
As shown in the proof of Theorem \ref{thm:iss}, it is hence possible to write
\begin{equation*}
\|\chi_1(k)-\chi_2(k)\| \leq \mu_\delta \lambda_\delta^k \|\chi_{01}-\chi_{02}\| + \|(I_{2}-A_\delta)^{-1}B_\delta\| \max_{h\geq 0}\|{u}_1(h)-{u}_2(h)\|,
\end{equation*}
i.e. \eqref{eq:model:lstm} is $\delta$ISS according to Definition~\ref{def:deltaISS}.
\end{proof}
\medskip

\begin{proof}[Proof of Proposition \ref{prop:deltaiss:schur}]
	In view of Lemma \ref{lemma:schur}, being $a = -\bar{\sigma}_g^f - \alpha \bar{\sigma}_g^o - \frac{1}{4} \bar{\sigma}_c^x \| U_o \|$ and $b= \frac{1}{4} \bar{\sigma}_g^f \bar{\sigma}_c^x \| U_o \|$, the matrix $A_\delta$ is Schur stable -- and $\rho(A_\delta) < 1$ -- if and only if inequalities \eqref{eq:deltaiss:condition} are fulfilled.
\end{proof}

\rev{\begin{proof}[Proof of Corollary \ref{prop:deltaiss_implies_iss}]
	In the following, Corollary \ref{prop:deltaiss_implies_iss} is demonstrated by showing that the satisfaction of inequality \eqref{eq:deltaiss:condition} implies the fulfillment of \eqref{eq:iss:condition}.
	In light of the definition of $\alpha$, the left-hand inequality of \eqref{eq:deltaiss:condition} reads as
	\begin{equation}
		-1 + \bar{\sigma}_g^f + \bigg( \frac{1}{4} \| U_f \| \frac{\bar{\sigma}_g^i \bar{\sigma}_g^o}{1 - \bar{\sigma}_g^f} + \bar{\sigma}_g^i \| U_c \| + \frac{1}{4} \|U_i\| \bar{\sigma}_c^c \bigg) \bar{\sigma}_g^o + \frac{1}{4} \bar{\sigma}_c^x \| U_o \| < \frac{1}{4} \bar{\sigma}_g^f \bar{\sigma}_c^x \| U_o \|,
	\end{equation} 
	which can be re-written as
	\begin{equation}
		-1 + \bar{\sigma}_g^f + \bar{\sigma}_g^o  \bar{\sigma}_g^i \| U_c \| <
		-\frac{1}{4} (1 - \bar{\sigma}_g^f) \bar{\sigma}_c^x  \| U_o \| - \frac{1}{4} \bar{\sigma}_g^o \| U_f \| \frac{\bar{\sigma}_g^i \bar{\sigma}_g^o}{1 - \bar{\sigma}_g^f} - \frac{1}{4} \|U_i\| \bar{\sigma}_c^c \bar{\sigma}_g^o.
	\end{equation}
	Recalling that $\bar{\sigma}_g^f \in (0, 1)$, the right-hand side of the inequality is surely negative, i.e.
	\begin{equation}
	-1 + \bar{\sigma}_g^f + \bar{\sigma}_g^o  \bar{\sigma}_g^i \| U_c \| < 0.
	\end{equation}
	Condition \eqref{eq:iss:condition} is hence fulfilled.
\end{proof}}

\begin{proof}[Proof of Theorem \ref{thm:obsv_convergence}]
Let us define the error variables $e_x=x-\hat{x}$, $e_{\xi}=\xi-\hat{\xi}$ and compute their evolution over time. In particular:
\begin{equation}\label{eq:proof:obsv:dev_x}
\begin{aligned}
e_x^+=&x^+-\hat{x}^+\\=& 	\sigma_g(W_f u + U_f \xi + b_f) \circ x+ \sigma_g(W_i u + U_i \xi + b_i) \circ \sigma_c(W_c u + U_c \xi + b_c) -\\
			& 	\bigg\{\sigma_g \big[W_f u + U_f \hat{\xi} + b_f + L_f(y-\hat{y}) \big] \circ \hat{x}+ \sigma_g[W_i u + U_i \hat{\xi} + b_i + L_i(y-\hat{y}) ] \circ \sigma_c(W_c u + U_c \hat{\xi} + b_c)\bigg\} \\
=			&	\sigma_g [ W_f u + U_f \xi + b_f + L_f (y - \hat{y}) ] \circ (x-\hat{x}) +	x \circ \big[ \sigma_g(W_f u + U_f xi + b_f) - \sigma_g[W_f u + U_f \hat{\xi} + b_f + L_f(y-\hat{y}) ] \big] \\
			&	+ \sigma_g(W_i u + U_i \xi + b_i) \circ \big[ \sigma_c(W_c u + U_c \xi + b_c)-\sigma_c(W_c u + U_c \hat{\xi} + b_c) \big]\\
			&	+\sigma_c(W_c u + U_c \hat{\xi} + b_c) \circ \big[\sigma_g(W_i u + U_i \xi + b_i)-\sigma_g[W_i u + U_i \hat{\xi} + b_i + L_i(y-\hat{y}) ] \big].	
\end{aligned}
\end{equation}
Recalling \eqref{eq:obsv:sigma_g_f}-\eqref{eq:obsv:alpha_hat}, and noting  that $y- \hat{y} = C \left(\xi - \hat{\xi}\right)$, by taking the norm of both sides of \eqref{eq:proof:obsv:dev_x} we get
\begin{equation}\label{eq:proof:obsv:e_x}
\|e_x^+\|\leq \,\hat{\bar{\sigma}}_g^f \|e_x\| + \frac{\bar{\sigma}_g^i\bar{\sigma}_c^c}{1-\bar{\sigma}_g^f}\frac{1}{4}\|U_f-L_fC\|\|e_{\xi}\| + \bar{\sigma}_g^i\|U_c\|\|e_{\xi}\|+\bar{\sigma}_c^c\frac{1}{4}\|U_i-L_iC\|\|e_{\xi}\| \leq \,\hat{\bar{\sigma}}_g^f \|e_x\| + \hat{\alpha} \|e_{\xi}\|.
\end{equation}
Similarly, the evolution of $e_{\xi}$ can be computed as follows
\begin{equation}
\begin{aligned}
e_{\xi}^+=\xi^+-\hat{\xi}^+ =& \,\sigma_g(W_o u + U_o \xi + b_o) \circ \sigma_c(x^+)  -  \sigma_g[W_o u + U_o \hat{\xi} + b_o + L_o(y-\hat{y}) ] \circ \sigma_c(\hat{x}^+) \\
=& \,	\sigma_g [ W_o u + U_o \xi + b_o + L_o (y - \hat{y}) ] \circ (\sigma_c(x^+) - \sigma_c(\hat{x}^+) ) \\
& + \sigma_c(x^+) \circ	\big[\sigma_g(W_o u + U_o \xi + b_o)-\sigma_g[W_o u + U_o \hat{\xi} + b_o + L_o(y-\hat{y}) ] \big]
\end{aligned}
\end{equation}
Then, recalling \eqref{eq:obsv:sigma_g_f}, taking the norm of both sides we obtain 
\begin{equation}\label{eq:proof:obsv:e_xi}
\begin{aligned}
	\|e_{\xi}^+\| \leq & \, \hat{\bar{\sigma}}_g^o \|e_x^+\| + \frac{1}{4}\|U_o-L_oC\|\|e_{\xi}\|\bar{\sigma}_c^x 
	\leq \hat{\bar{\sigma}}_g^o \big(\hat{\bar{\sigma}}_g^f \|e_x\| + \hat{\alpha}\|e_{\xi}\|\big)+ \bar{\sigma}_c^x\frac{1}{4}\|U_o-L_oC\|\|e_{\xi}\| \\
	\leq	&\hat{\bar{\sigma}}_g^o \hat{\bar{\sigma}}_g^f \|e_x\|+\hat{(\bar{\sigma}}_g^o \hat{\alpha}+\frac{1}{4}\|U_o - L_o C\|\bar{\sigma}_c^x)\|e_{\xi}\|	
\end{aligned}
\end{equation}
Combining \eqref{eq:proof:obsv:e_x} and \eqref{eq:proof:obsv:e_xi} we can write
\begin{align}
\label{eq:eq:e_chi}
&\begin{bmatrix}
\|e_x^+\| \\
\|e_{\xi}^+\| \end{bmatrix}\leq \hat{A}\begin{bmatrix}
\|e_x\|
\\
\|e_{\xi}\|
\end{bmatrix}
\end{align}
with $\hat{A}$ being defined as in \eqref{eq:obsv:A_hat}. 
Following the same steps as the Proof of Theorem \ref{thm:iss}, if $\rho(\hat{A}) < 1$  the norm of the prediction error $\| \chi - \hat{\chi} \|$ exponentially converges to zero.
\end{proof}

\begin{proof}[Proof of Proposition \ref{prop:obsv:tuning}]
	The Schur stability of the matrix $\hat{A}$ defined in \eqref{eq:obsv:A_hat} can be assessed applying Lemma \ref{lemma:schur}, leading to the condition \eqref{eq:obsv:tuning:constraint}. 
	Such condition is applied as a constraint in the optimization problem \eqref{eq:obsv:tuning}, so as to ensure that Theorem \ref{thm:obsv_convergence} holds.
\end{proof}

\begin{proof}[Proof of Theorem \ref{thm:mpc_stabilizing}]
Assume that, at time step $k$, the optimal solution $U(k|k)$ to the MPC problem is obtained, and that according to the Receding Horizon principle the first optimal input value $u(k)=u(k|k)$ is applied to the system. 
We denote with $\chi(k+i|k)$, with $i \in \{0,\dots,N\}$, the state trajectory obtained iterating \eqref{eq:model:lstm} with initial condition $\chi(k|k)=\hat{\chi}(k)$, and using $U(k|k)$ as input sequence. Similarly, we denote $\Delta(k+N|k)=\begin{bmatrix}
\|x(k+N|k)		-\bar{x}\|\\
\|\xi(k+N|k)	-\bar{\xi}\|
\end{bmatrix}$.

We will consider the optimal value of $J$ (denoted $J^*(k)$) as a candidate Lyapunov function to analyze the stability properties of the MPC control algorithm:
	\begin{equation*}
	J^*(k)=\sum\limits_{i=0}^{N-1}\biggl(\|\chi(k+i|k)\!-\!\bar{\chi}\|_Q^2\!+\!\|u(k+i|k)\!-\!\bar{u}\|_R^2\biggr)\!\nonumber+\! \|\Delta(k+N|k)\|_P^2.
	\end{equation*}
First, it should be noted that
	\begin{equation}\label{eq:proof:mpc:ineq1}
    J^*(k)\geq \gamma_1\left\|\hat{\chi}(k)-\bar{\chi}\right\|^2,
	\end{equation}
	where $\gamma_1=\lambda_{\min}(Q)$.
	Secondly, note that $u(k+i)=\bar{u}$ is a possibly suboptimal yet feasible control input for all $i\in \{0,\dots,N-1\}$. We denote with $\chi^o(k+i|k)=[x^o(k+i|k)^T\,\,\xi^o(k+i|k)^T]^T$, with $i\in \{0,\dots,N\}$, the state trajectory obtained iterating \eqref{eq:model:lstm} with initial condition $\chi(k|k)=\hat{\chi}(k)$, and using $\{u(k)=\bar{u},\dots,u(k+N-1)=\bar{u}\}$ as input sequence.
	We thus obtain that
	\begin{equation*}
	J^*(k)\leq \sum\limits_{i=0}^{N-1}\|\chi^o(k+i|k)-\bar{\chi}\|_Q^2\nonumber+ \! \|\Delta^o(k+N|k)\|_P^2,
	\end{equation*}
    where
    $\Delta^o(k+N|k)=[\|x^o(k+N|k)-\bar{x}\|,\,\,\|\xi^o(k+N|k)-\bar{\xi}\|]^T$.
    First note that $\|\Delta^o(k+N|k)\|_P^2\leq \lambda_{\max}(P)\|\Delta^o(k+N|k)\|^2=\lambda_{\max}(P)\|\chi^o(k+N|k)-\bar{\chi}\|^2$.
    As in the proof of Theorem \ref{thm:deltaISS}, since $\rho(A_\delta)<1$ there exist $\mu\geq 0$ and $\lambda\in(0,1)$ such that, $\forall i\geq 0$,
    \begin{equation*}
	\|\chi^o(k+i|k)-\bar{\chi}\|\leq  \mu\lambda^i\|\hat{\chi}(k)-\bar{\chi}\|.
	\end{equation*}
	This implies that there exists a constant $\gamma_2 \geq 0$ such that $$J^*(k)\leq \gamma_2\|\hat{\chi}(k)-\bar{\chi}\|^2.$$
	At time $k+1$ (with some abuse of notation, but for the sake of simplicity), we denote with $\chi(k+i|k+1)$, with $i \in \{1,\dots,N+1\}$, the possibly suboptimal state trajectory obtained iterating \eqref{eq:model:lstm} with initial condition $\chi(k+1|k+1)=\hat{\chi}(k+1)$, and using $\{u(k+1|k),\dots,u(k+N-1|k),\bar{u}\}$ as input sequence. Note that $\chi(k+1|k+1)=\hat{\chi}(k+1) \neq \varphi(\hat{\chi}(k|k),u(k|k))=\chi(k+1|k)$.
	For all $i \in \{1, ..., N-1\}$, we introduce the following quantities
	\begin{equation}
	\begin{aligned}
		\varepsilon(k+1|k+1) &=\hat{\chi}(k+1)-\chi(k+1|k),\\
		\varepsilon_{\Delta}(k+1|k+1) &=\Delta(k+1|k+1)-\Delta(k+1|k),\\
		\varepsilon(k+i+1|k+1) &=\chi(k+i+1|k+1)-\chi(k+i+1|k),\\ \varepsilon_{\Delta}(k+i+1|k+1) &=\Delta(k+i+1|k+1)-\Delta(k+i+1|k).
	\end{aligned}
	\end{equation}
Then, the optimal value $J^*(k+1)$ satisfies
\begin{equation*}
		\begin{aligned}
J^*(k+1)\leq&\sum\limits_{i=0}^{N-1}\biggl(\|\chi(k+1+i|k+1)\!-\!\bar{\chi}\|_Q^2\!+\!\|u(k+1+i|k)\!-\!\bar{u}\|_R^2\biggr)\!+
	\! \|\Delta(k+N+1|k+1)\|_P^2\\
	\leq&\sum\limits_{i=0}^{N-1}\biggl(\|\chi(k+1+i|k)\!-\!\bar{\chi}\! + \!\varepsilon(k+i+1|k+1)\|_Q^2\!+\!\|u(k+1+i|k)\!-\!\bar{u}\|_R^2\biggr)\!\\
	& + \|\Delta(k+N+1|k)+\!\varepsilon_{\Delta}(k+N+1|k+1)\|_P^2
	\end{aligned}
\end{equation*}
	Therefore  
	\begin{equation}\label{eq:diseq_opt_cost_fcn}
	\begin{aligned}	 
	J^*(k+1)-J^*(k) \leq &-\|\chi(k|k)-\bar{\chi}\|^2_Q-\|u(k|k)-\bar{u}\|^2_{R} \\		 &+\sum\limits_{i=1}^{N-1}\biggl(\|\chi(k+i|k)-\bar{\chi}+\varepsilon(k+i|k+1)\|_Q^2-\|\chi(k+i|k)-\bar{\chi}\|_Q^2\biggr) \\
	& + \|\chi(k+N|k)-\bar{\chi}+\varepsilon(k+N|k+1)\|_Q^2 +\|u(k+N|k+1)-\bar{u}\|_R^2\\
	& +\|\Delta(k+N+1|k)+\varepsilon_{\Delta}(k+N+1|k+1)\|_P^2 -\! \|\Delta(k+N|k)\|_P^2	
	\end{aligned}
	\end{equation}
	We now consider the different additive terms at the right hand side of inequality \eqref{eq:diseq_opt_cost_fcn}. First, we write 
\begin{equation*}
\begin{aligned}
	& \sum\limits_{i=1}^{N-1}\biggl(\|(\chi(k+i|k)-\bar{\chi})+\varepsilon(k+i|k+1)\|_Q^2-\|\chi(k+i|k)-\bar{\chi}\|_Q^2\biggr) \\ &\quad=\sum\limits_{i=1}^{N-1}\biggl(\|\chi(k+i|k)-\bar{\chi}\|^2_Q-\|\chi(k+i|k)-\bar{\chi}\|_Q^2+\|\varepsilon(k+i|k+1)\|_Q^2 +2(\chi(k+i|k)-\bar{\chi})^TQ\varepsilon(k+i|k+1)\biggr) \\	  &\quad=\sum\limits_{i=1}^{N-1}\biggl(\|\varepsilon(k+i|k+1)\|_Q^2+2(\chi(k+i|k)-\bar{\chi})^TQ\varepsilon(k+i|k+1)\biggr).
\end{aligned}
\end{equation*}
	\normalsize
	Also, in view of Property \ref{prop:norms}, it holds that
\begin{equation*}
\begin{aligned}
	-\|&\Delta(k+N|k)\|_P^2 + \|\chi(k+N|k)-\bar{\chi}+\varepsilon(k+N|k+1)\|_Q^2 +\|\Delta(k+N+1|k)+\varepsilon(k+N+1|k+1)\|_P^2 \\
	&\quad \leq -\|\Delta(k+N|k)\|_P^2+(1+\rho^2)\|\chi(k+N|k)-\bar{\chi}\|_Q^2+\\
	&\qquad + \left(1+\frac{1}{\rho^2}\right)\|\varepsilon(k+N|k+1)\|_Q^2+(1+\rho^2)\|\Delta(k+N+1|k)\|_P^2 +\left(1+\frac{1}{\rho^2}\right)\|\varepsilon_{\Delta}(k+N+1|k+1)\|_P^2\\
	&\quad =-\|\Delta(k+N|k)\|_P^2+\left(1+\rho^2\right)\bigl(\|\chi(k+N|k)-\bar{\chi}\|_Q^2 +\|\Delta(k+N+1|k)\|_P^2\bigr) + \\	
    &\qquad +\left(1+\frac{1}{\rho^2}\right)(\|\varepsilon(k+N|k+1)\|_Q^2+\|\varepsilon_{\Delta}(k+N+1|k+1)\|_P^2).
\end{aligned}
\end{equation*}
    Noting that
    \begin{equation} \label{eq:proof:mpc:property_norm_deltaX}
    \|\Delta\| = \biggl\|\begin{bmatrix}
    \|x-\bar{x}\|\\
    \|\xi-\bar{\xi}\|\\
    \end{bmatrix}\biggr\| =
    \biggl\|\begin{bmatrix}
    x-\bar{x}\\
    \xi-\bar{\xi}\\
    \end{bmatrix}\biggr\| =
    \|\chi - \bar{\chi}\|,
    \end{equation}
    and in view of the $\delta$ISS property of the system:
    \begin{equation} \label{eq:proof:mpc:evolution_delta}
    \Delta(k+N+1|k) \leq A_\delta \,\Delta(k+N|k).
    \end{equation}
    Thus it follows that
	\begin{equation*} 
	\begin{aligned}
	-\|&\Delta(k+N|k)\|_P^2+(1+\rho^2)\bigl(\|\chi(k+N|k)-\bar{\chi}\|^2_Q+\|\Delta(k+N+1|k)\|_P^2\bigr)\\
	& \quad \leq -\|\Delta(k+N|k)\|_P^2+(1+\rho^2)q\bigl(\|\chi(k+N|k)-\bar{\chi}\|^2 +\|A_\delta \Delta(k+N|k)\|_P^2\bigr)\\
	& \quad \leq \|\Delta(k+N|k)\|^2_{A_\delta^TPA_\delta-P+qI_2(1+\rho^2)}. \\
	\end{aligned}
	\end{equation*} 
	By construction $A_\delta^TPA_\delta-P \prec -qI_2$, see \eqref{eq:mpc:P_matrix}, therefore we can always select a value of $\rho>0$ small enough to obtain $A_\delta^TPA_\delta-P+qI_2(1+\rho^2) \prec 0$. Overall, we obtain that
	\begin{equation}\label{eq:proof:mpc:ineq_opt_cost1}
	J^*(k+1)-J^*(k)\leq -\|\hat{\chi}(k)-\bar{\chi}\|^2_Q-\|{u}(k)-\bar{u}\|^2_{R} + \text{(a)},
	\end{equation}
	where
	\begin{equation*}
	\begin{aligned}
		\text{(a)}=&\sum\limits_{i=1}^{N-1}\biggl(\|\varepsilon(k+i|k+1)\|_Q^2+2(\chi(k+i|k)-\bar{\chi})^TQ\varepsilon(k+i|k+1)\biggr)\\
		&+\left(1+\frac{1}{\rho^2}\right)\left(\|\varepsilon(k+N|k+1)\|_Q^2+\|\varepsilon_{\Delta}(k+N+1|k+1)\|_P^2\right)
	\end{aligned}
	\end{equation*}
	\normalsize
Note that 
\begin{equation}
\begin{aligned}
\|\varepsilon_{\Delta}(k+N+1|k+1)\|_P^2  \leq &\lambda_{\rm\scriptscriptstyle max}(P)((\|x(k+N+1|k+1)-\bar{x}\|-\|x(k+N+1|k)-\bar{x}\|)^2 + \\
&+(\|\xi(k+N+1|k+1)-\bar{\xi}\|-\|\xi(k+N+1|k)-\bar{\xi}\|)^2) \\
\leq &\lambda_{\rm\scriptscriptstyle max}(P)(\|x(k+N+1|k+1)-x(k+N+1|k)\|^2 + \\
&+\|\xi(k+N+1|k+1)-\xi(k+N+1|k)\|^2)
\end{aligned}
\end{equation}
The latter inequality is justified by the fact that, given two vectors $a$ and $b$, $|\,\|a\|-\|b\|\,|\leq \|a-b\|$. In view of this,
$\|\varepsilon_{\Delta}(k+N+1|k+1)\|_P^2\leq \lambda_{\rm\scriptscriptstyle max}(P)\|\chi(k+N+1|k+1)-\chi(k+N+1|k)\|^2=\lambda_{\rm\scriptscriptstyle max}(P)\|\varepsilon(k+N+1|k+1)\|^2$.\\
Overall we can, for simplicity, define an upper bound to (a) as follows
\begin{align} \text{(a)}\leq&\alpha_1\sum\limits_{i=1}^{N+1}\|\varepsilon(k+i|k+1)\|^2+\alpha_2\sum\limits_{i=1}^{N-1}\|\varepsilon(k+i|k+1)\|\notag
\label{eq:proof:mpc:ineqa}
\end{align}
where $\alpha_1$ and $\alpha_2$ are suitable positive scalars. 
In the following we analyze more in details the terms $\varepsilon(k+i|k+1)$.
First, recalling \eqref{eq:model:bound:sigma_g_f}-\eqref{eq:model:bound:sigma_c_x} and \eqref{eq:obsv:sigma_g_f}-\eqref{eq:obsv:alpha_hat}, let us note that the invariant set of $\hat{x}_{(j)}$ is $\hat{\mathcal{X}}  = \bigg\{ x \in \mathbb{R} : | x | \leq \frac{\hat{\bar{\sigma}}_g^i \hat{\bar{\sigma}}_g^o}{1- \hat{\bar{\sigma}}_g^f}  \bigg\}$, and thus $\hat{\bar{\sigma}}_c^x = \sigma_c \bigg( \frac{\hat{\bar{\sigma}}_g^i \hat{\bar{\sigma}}_g^o}{1- \hat{\bar{\sigma}}_g^f} \bigg)$. 
The two sub-vectors of  $\varepsilon(k+1|k+1) = [ (\hat{x}(k+1) - x(k+1|k))^T, (\hat{\xi}(k+1) - \xi(k+1 |k))^T ]^T$ are now computed as: \\[1mm]
\begin{equation} \label{eq:proof:mpc:epsilon_x}
\begin{aligned}
	\hat{x}(k+1) - x(k+1|k) =& \sigma_g[W_f u + U_f \hat{\xi} + b_f + L_f C (\xi - \hat{\xi})] \circ \hat{x} + \sigma_g[W_i u + U_i \hat{\xi} + b_i + L_i C (\xi - \hat{\xi})] \circ \sigma_c[W_c u + U_c \hat{\xi} + b_c] \\
	& -\sigma_g[W_f u + U_f \hat{\xi} + b_f] \circ \hat{x} + \sigma_g[W_i u + U_i \hat{\xi} + b_i] \circ \sigma_c[W_c u + U_c \hat{\xi} + b_c] 
\end{aligned}
\end{equation} \vspace{-2mm}
\begin{equation}\label{eq:proof:mpc:epsilon_xi}
\begin{aligned}
\hat{\xi}(k+1) - \xi(k+1|k) =& \sigma_g[W_o u + U_o \hat{\xi} + b_o + L_o C (\xi - \hat{\xi})] \circ \sigma_c(\hat{x}^+) - \sigma_g[W_o u + U_o \hat{\xi} + b_o ] \circ \sigma_c(x(k+1|k)) 
\\
=& \big[ \sigma_g[W_o u + U_o \hat{\xi} + b_o + L_o C (\xi - \hat{\xi})] - \sigma_g[W_o u + U_o \hat{\xi} + b_o + L_o C (\xi - \hat{\xi})]  \big] \circ  \sigma_c(\hat{x}^+) + \\
&+ \sigma_g[W_o u + U_o \hat{\xi} + b_o + L_o C (\xi - \hat{\xi})] \big( \sigma_c(\hat{x}^+) - \sigma_c(x(k+1|k))\big)
\end{aligned}
\end{equation}\\[1mm]
Thus, taking the norm of both sides of \eqref{eq:proof:mpc:epsilon_x} and \eqref{eq:proof:mpc:epsilon_xi}, and exploiting the aforementioned bounds, it can be shown that
\begin{subequations}
	\begin{equation}
	\| \varepsilon(k+1|k+1) \| \leq \alpha_3
	\|\chi(k)-\hat{\chi}(k)\|,
	\end{equation}
	where 
	\begin{equation}
		\alpha_3=\left\|\,\, \begin{bmatrix}
		0 & \quad \frac{1}{4}\frac{\hat{\bar{\sigma}}_g^i \hat{\bar{\sigma}}_c^c}{1-\hat{\bar{\sigma}}^f_g}\|L_fC\| + \frac{1}{4}\bar{\sigma}_c^c\|L_iC\|\\
		0 & \quad \bar{\sigma}_g^o \bigg( \frac{1}{4}\frac{\hat{\bar{\sigma}}_g^i \hat{\bar{\sigma}}_c^c}{1-\hat{\bar{\sigma}}^f_g} \|L_fC\| + \frac{1}{4}\bar{\sigma}_c^c\|L_iC\| \bigg) +\frac{1}{4}\hat{\bar{\sigma}}_c^x\|L_oC\|
		\end{bmatrix}  \,\, \right\|.
	\end{equation}
\end{subequations}\\[1mm]
Letting $e(k)=\chi(k)-\hat{\chi}(k)$, in view of the error convergence rate ensured by the observer, see Theorem \ref{thm:obsv_convergence}, we can guarantee that
	\begin{equation}
	\|\varepsilon(k+1|k+1)\|\leq \alpha_3 \mu \lambda^{k}\|e(0)\|.
	\end{equation}
	Thus, in view of $\delta$ISS of the system (see Theorem \ref{thm:deltaISS}) we can further state that
	\begin{equation}
	\|\varepsilon(k+i+1|k+1)\|\leq \mu \lambda^i \|\varepsilon(k+1|k+1)\|
	\end{equation}\normalsize
	This implies that
	\begin{equation}
	\begin{array}{c}
	\|\varepsilon(k+i|k+1)\|\leq \alpha_3 \mu^2 \lambda^{k+i-1}\|e(0)\|
	\label{eq: eps}
	\end{array}
	\end{equation}
	By combining inequalities \eqref{eq:proof:mpc:ineq_opt_cost1} and \eqref{eq: eps} we eventually obtain that there exists a $\mathcal{KL}$ function $\tilde{\beta}$ and a constant $\gamma_3=\gamma_1>0$ such that $\tilde{J}^*(k+1)-\tilde{J}^*(k)\leq-\gamma_3\left\|\hat{\chi}(k)-\bar{\chi}\right\|^2
	+\tilde{\beta}(\|e(0)\|,k)$, where $\tilde{\beta}(\|e(0)\|,k)$ is exponentially decreasing with respect to its second argument $k$.
Along the same lines of reasoning of Theorem 3 in \cite{Scokaert97}, we can prove asymptotic stability of the equilibrium point denoted by the triplet $(\bar{u},\bar{\chi},\bar{y})$.

\end{proof}

\bibliography{mybiblio}
\bibliographystyle{plain}
\end{document}